\newtheorem{assumption}{Assumption}
\begin{document}
\title{Relationship between H\"{o}lder Divergence and Functional Density Power Divergence: Intersection and Generalization}
\titlerunning{Relationship between H\"{o}lder Divergence and Functional Density Power}
%
\author{Masahiro Kobayashi\orcidID{0000-0002-0278-6153}}
\authorrunning{M. Kobayashi}
%
\institute{Information and Media Center, Toyohashi University of Technology, 1-1 Hibarigaoka, Tempaku-cho, Toyohashi, Aichi, 441-8580, Japan\\ \email{kobayashi@imc.tut.ac.jp}}
\maketitle              
\begin{abstract}
In this study, we discuss the relationship between two families of density-power-based divergences with functional degrees of freedom---the H\"{o}lder divergence and the functional density power divergence (FDPD)---based on their intersection and generalization.
These divergence families include the density power divergence and the $\gamma$-divergence as special cases.
First, we prove that the intersection of the H\"{o}lder divergence and the FDPD is limited to a general divergence family introduced by Jones et al. (Biometrika, 2001).
Subsequently, motivated by the fact that H\"{o}lder's inequality is used in the proofs of nonnegativity for both the H\"{o}lder divergence and the FDPD, we define a generalized divergence family, referred to as the $\xi$-H\"{o}lder divergence.
The nonnegativity of the $\xi$-H\"{o}lder divergence is established through a combination of the inequalities used to prove the nonnegativity of the H\"{o}lder divergence and the FDPD.
Furthermore, we derive an inequality between the composite scoring rules corresponding to different FDPDs based on the $\xi$-H\"{o}lder divergence.
Finally, we prove that imposing the mathematical structure of the H\"{o}lder score on a composite scoring rule results in the $\xi$-H\"{o}lder divergence.

\keywords{H\"{o}lder divergence  \and Functional density power divergence \and JHHB divergence family \and Proper composite scoring rule.}
\end{abstract}
\section{Introduction}
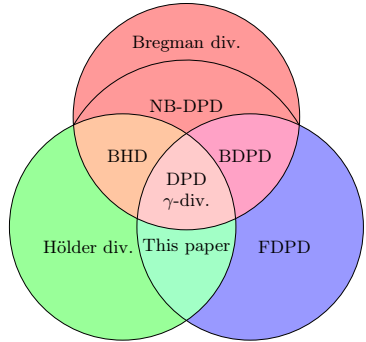
\begin{figure}[tb]
\centering
\resizebox{0.4\linewidth}{!}{
    \begin{tikzpicture}
        \def\r{2}
        \def\d{1.3}
        \begin{scope}[blend group = soft light]
            \fill[red!40!white]   ( 90:\d) circle (\r);
            \fill[green!40!white] (210:\d) circle (\r);
            \fill[blue!40!white]  (330:\d) circle (\r);
        \end{scope}
        \draw[black] (90:\d) circle(\r);
        \draw[black] (210:\d) circle(\r);
        \draw[black] (330:\d) circle(\r);
        \node at ( 90:2.6)    {Bregman div.};
        \node at ( 90:1.5)    {NB-DPD};
        \node at ( 210:2)   {H\"{o}lder div.};
        \node at ( 330:2)   {FDPD};
        \node at (0,0.2) {DPD};
        \node at (0,-0.2) {$\gamma$-div.};
        \node at (30:1.2) {BDPD};
        \node at (150:1.2) {BHD};
        \node at (270:1.0) {This paper};
        \draw[black] (30:2.3) arc (30:150:2.3);
    \end{tikzpicture} }
    \caption{Relationship among the DPD, the $\gamma$-divergence, and their generalized divergence families.}
    \label{fig:relation_div}
\end{figure}
In robust inference, the density power divergence (DPD), also known as $\beta$-divergence, is widely used \cite{beta-div}.
The DPD has a power parameter $\gamma\geq0$ that controls the trade-off between model efficiency and robustness against outliers.
The $\gamma$-divergence \cite{gamma-div,type0-div} is another density-power-based divergence.
Similarly to the DPD, the $\gamma$-divergence provides a trade-off between efficiency and robustness.
The $\gamma$-divergence can reduce the latent bias to zero, even when the proportion of outliers in the data is large \cite{gamma-div}.

These divergences are defined to ensure that the true distribution can be replaced by the empirical distribution, expressed as a sum of Dirac delta functions.
Such divergences are called non-kernel \cite{Jana2019} or decomposable \cite{Broniatowski2012}, and this property is important for practical statistical inference.
Motivated by the same consideration, the divergence induced by a composite scoring rule \cite{holder-div1} has been proposed and will be explained in Section \ref{sec:notation}.
Furthermore, several generalized divergences that include the DPD and the $\gamma$-divergence as special cases have been developed in the class of non-kernel divergences.
Examples of two- or three-parameter divergence families include the JHHB divergence family \cite{type0-div}, the Bregman--H\"{o}lder divergence (BHD) \cite{holder-div1}, and the bridge density power divergence (BDPD) \cite{bridge-div,Gayen2024}.

The DPD and the $\gamma$-divergence can be generalized into three classes of divergences with functional degrees of freedom, within which the two- and three-parameter divergence families are positioned as special subclasses (Fig. \ref{fig:relation_div}).
The first class is the (functional or non-separable) Bregman divergence \cite{Frigyik2008}, which is defined using a convex functional.
This divergence can be represented using proper scoring rules, and the estimation problem can be reduced to M-estimation \cite{Gneiting2007}.
Recently, we have proposed a norm-based Bregman density power divergence (NB-DPD), which is subclass of Bregman divergences characterized by density-power-based formulations \cite{mypaper7}.
The second class is the H\"{o}lder divergence \cite{holder-div1}, which is characterized by invariance under affine transformations.
The third class is the functional density power divergence (FDPD) \cite{FDPD}, which is defined based on the mathematical structure shared by the DPD and the $\gamma$-divergence.
Specifically, the FDPD is defined by applying a functional transformation to each term of the DPD.
Ray et al. \cite{FDPD} clarified the necessary and sufficient conditions for the function such that the FDPD is valid as a divergence.
Investigating the intersection of different divergence families causes a deeper understanding and characterization of their structural properties.
Kanamori and Fujisawa \cite{holder-div1} derived the BHD as the intersection of the Bregman divergence and the H\"{o}lder divergence.
Recently, we demonstrated that the intersection of the Bregman divergence and the FDPD is the BDPD \cite{mypaper7}.
However, the relationship between the H\"{o}lder divergence and the FDPD has not been discussed.

In this study, we discuss the relationship between the H\"{o}lder divergence and the FDPD from two perspectives.
First, we prove that the intersection of the H\"{o}lder divergence and the FDPD is limited to the JHHB divergence family (Fig. \ref{fig:relation_div}).
Furthermore, we derive the function $\eta$ corresponding to the H\"{o}lder divergence when the JHHB divergence family is expressed in the form of a H\"{o}lder divergence.
Subsequently, we define a $\xi$-H\"{o}lder divergence that includes both the H\"{o}lder divergence and the FDPD as special cases.
Furthermore, we derive an inequality between composite scoring rules corresponding to different FDPDs based on the $\xi$-H\"{o}lder divergence.
Finally, we prove that the $\xi$-H\"{o}lder divergence can be derived by imposing the mathematical structure of the H\"{o}lder score on a composite scoring rule.
The proofs of the theorems omitted in the main text are presented in the appendix.

\section{Notation and definitions}\label{sec:notation}
In this section, we introduce certain mathematical definitions.
The set of nonnegative real numbers is denoted by $\mathbb{R}_+$.
Let $X \subseteq \mathbb{R}^d$ be a subset of the Euclidean space, and define a measure space $(X, \mathcal{B}, \nu)$.
The set of nonnegative functions is denoted by $\mathcal{L}_0=\{f:X\to\mathbb{R}\mid f \text{ is measurable on } (X,\mathcal{B},\nu), f\geq0, f\neq0\}$. 
For $f\in\mathcal{L}_0$, we write the integral with respect to the Lebesgue measure $\nu$ as $\langle f\rangle = \int_X f(x)d\nu(x)$.
Suppose that $\mathcal{F}\subseteq\mathcal{L}_0$, and define the set of probability distributions as $\mathcal{P}=\{p\in\mathcal{F}\mid \langle p\rangle=1\}$.
We assume $\mathcal{F}_\gamma=\{f\in\mathcal{L}_0|\langle f^{1+\gamma}\rangle<\infty\}$ for a fixed $\gamma\geq0$.
Here, we define a specific form of composite scoring rule discussed in \cite{holder-div1}.
\begin{definition}[Composite scoring rule and divergence \cite{holder-div1}]\label{def:divergence}
    Let $U,V:\mathbb{R}_+\to\mathbb{R}$, and $T:\mathbb{R}^2\to\mathbb{R}$ be given functions.
    The composite scoring rule $S:\mathcal{F}\times\mathcal{F}\to\mathbb{R}$ is defined as 
    \begin{align*}
        S(g,f) = T\left(\langle gU(f)\rangle, \langle V(f)\rangle\right).
    \end{align*}
    The composite scoring rule $S$ is strictly proper if and only if the following conditions hold: 
    \begin{align*}
        &\forall g,f\in\mathcal{F},\; S(g,f)\geq S(g,g), \\
        &\forall q,p\in\mathcal{P},\;S(q,p) = S(q,q),\; \text{if and only if}\; q=p\; \text{almost everywhere}.
    \end{align*}
    A divergence is defined as the difference between composite scoring rules.
    From the properties of composite scoring rule, the following properties of the divergence hold:
    \begin{align*}
        &\forall g,f\in\mathcal{F},\; D(g,f)=S(g,f)-S(g,g)\geq0,\\
        &\forall q,p\in\mathcal{P},\; D(q,p)=0\; \text{if and only if}\; q=p\; \text{almost everywhere}.
    \end{align*}
\end{definition}

\begin{definition}[Equivalence of composite scoring rules \cite{holder-div1}]
    Two composite scoring rules $\tilde{S}(g,f)$ and $S(g,f)$ are equivalent if and only if there exists a strictly increasing function $\tau:\mathbb{R}\to\mathbb{R}$ such that $\tilde{S}(g,f)=\tau(S(g,f)), \;\mathrm{for\; all}\; g,f\in\mathcal{F}$.
    This equivalence also holds for probability distributions $q,p\in\mathcal{P}$.
\end{definition}

The estimator is preferably designed to transform consistently when the data is transformed.
An affine transformation of a data is expressed by $\bm{x}\mapsto\bm{\Sigma}^{-1}(\bm{x}-\bm{\mu})$, where $\bm{\Sigma}\in\mathbb{R}^{d\times d}$ is an invertible matrix and $\bm{\mu}\in\mathbb{R}^d$ is a $d$-dimensional vector. 
The corresponding probability distribution transforms as $q(\bm{x}) \mapsto q_{\bm{\Sigma},\bm{\mu}}(\bm{x})=|\det \bm{\Sigma}|q(\bm{\Sigma}\bm{x}+\bm{\mu})$.
When the statistical model $\hat{p}$ estimated from the original data distribution $q$ is equal to the model $\widehat{p_{\bm{\Sigma},\bm{\mu}}}$ estimated from the affinely transformed distribution, i.e., when $\hat{p}=\widehat{p_{\bm{\Sigma},\bm{\mu}}}$ holds, the estimator is called an affine invariant estimator \cite{holder-div1}.
A divergence that yields an affine invariant estimator is referred to as an affine invariant divergence.
Here, we define the affine invariant divergence on the space of nonnegative functions.
\begin{definition}[Affine invariant divergence \cite{holder-div1}]\label{def:affine_invariance}
    A divergence $D$ is affine invariant if there exists a real-valued function $h$ such that
    \begin{align*}
        h(\bm{\Sigma}, \bm{\mu})D(g_{\bm{\Sigma},\bm{\mu}}, f_{\bm{\Sigma},\bm{\mu}})=D(g,f).
    \end{align*}
\end{definition}

\section{Related Divergences}\label{sec:related_divergences}
\subsection{H\"{o}lder divergence}
\begin{definition}[H\"{o}lder score and divergence \cite{holder-div1}]
Let $\gamma\geq0$.
When $\gamma>0$, suppose that $\eta:\mathbb{R}_+\to\mathbb{R}$ satisfies $\eta(1)=-1$ and $\eta(z)\geq -z^{1+\gamma}$ for all $z\geq0$.
Subsequently, the H\"{o}lder score between nonnegative functions $g,f\in\mathcal{F}_\gamma$\footnote[1]{\label{fn:nonnegative_set} If $\gamma=0$, $\mathcal{F}_0$ is considered with an additional integrability condition: $\langle g\log f\rangle<\infty$.} is defined by
\begin{align}
    S_{\eta,\gamma}(g,f)=\begin{cases}
        \eta\left(\frac{\left\langle gf^\gamma\right\rangle}{\left\langle f^{1+\gamma}\right\rangle}\right)\left\langle f^{1+\gamma}\right\rangle, &(\gamma>0), \label{eq:H_score}\\
        -\langle g\log f\rangle+\langle f\rangle, &(\gamma=0).
    \end{cases}
\end{align}
The H\"{o}lder divergence is defined as the difference of the H\"{o}lder scores:
\begin{align*}
    D_{\eta, \gamma}(g,f) = \begin{cases}
        \eta\left(\frac{\left\langle gf^\gamma\right\rangle}{\left\langle f^{1+\gamma}\right\rangle}\right)\left\langle f^{1+\gamma}\right\rangle + \left\langle g^{1+\gamma}\right\rangle, &(\gamma>0),\\
        \left\langle g\log\frac{g}{f}\right\rangle-\langle g\rangle+\langle f\rangle, &(\gamma=0).
    \end{cases}
\end{align*}
\end{definition}
The H\"{o}lder divergence reduces to several known divergences depending on the choice of $\eta$.
Specifically, if $\eta(z)=\gamma-(1+\gamma)z$, it becomes the DPD \cite{beta-div};
if $\eta(z)=-z^{1+\gamma}$, it generates the pseudo-spherical (PS) type $\gamma$-divergence \cite{gamma-div,type0-div}.
For $\eta(z)=-|\kappa z-\kappa+1|^{\frac{1+\gamma}{\kappa}}\mathrm{sign}(\kappa z-\kappa+1)$, $\kappa\geq1$, the H\"{o}lder divergence corresponds to the BHD \cite{holder-div1}, where $\mathrm{sign}(z)=z/|z|$ denotes the sign function with $\mathrm{sign}(0)=0$.
This divergence reduces to the DPD when $\kappa=1+\gamma$, and to the PS type $\gamma$-divergence when $\kappa=1$.
Although this relationship has not been demonstrated in the existing literature, we will show in Section \ref{sec:intersection} that the JHHB divergence family \cite{type0-div} is a subclass of the H\"{o}lder divergence.

\subsection{Functional density power divergence}
Ray et al. \cite{FDPD} introduced the FDPD between probability distributions.
In this section, we extend the definition of the FDPD to nonnegative functions.
\begin{definition}[Functional density power score and divergence \cite{FDPD}]\label{def:FDPD}
Let $\psi:[-\infty,\infty)\to[-\infty,\infty]$ be a continuous, strictly increasing, and convex function, as well as define $\varphi:[0,\infty)\to[-\infty,\infty]$ by $\psi(z)=\varphi(e^z)$.
For $\gamma\geq0$, define FDPD between nonnegative functions $g,f\in\mathcal{F}_\gamma$\textsuperscript{\ref{fn:nonnegative_set}} as
\begin{subnumcases}{D_{\varphi,\gamma}(g,f) =}
    \frac{1}{\gamma}\varphi\left(\left\langle g^{1+\gamma}\right\rangle\right) - \frac{1+\gamma}{\gamma}\varphi\left(\left\langle gf^\gamma\right\rangle\right)+\varphi\left(\left\langle f^{1+\gamma}\right\rangle\right), &$(\gamma>0)$, \label{eq:FDPD}\\
    \varphi'(\langle g\rangle)\left\langle g\log\frac{g}{f}\right\rangle -\varphi(\langle g\rangle)+\varphi(\langle f\rangle), &$(\gamma=0)$, \label{eq:FDPD_KL}
\end{subnumcases}
where $\varphi'$ denotes the derivative of $\varphi$.
The corresponding functional density power score (FDPS) is denoted by
\begin{subnumcases}{S_{\varphi,\gamma}(g,f)=}
    \gamma\varphi\left(\left\langle f^{1+\gamma}\right\rangle\right)- (1+\gamma)\varphi\left(\left\langle gf^\gamma\right\rangle\right), &$(\gamma>0)$, \label{eq:FDP_score_gamma}\\
    -\varphi'(\langle g\rangle)\left\langle g\log f\right\rangle +\varphi(\langle f\rangle), &$(\gamma=0)$. \label{eq:FDP_score_0}
\end{subnumcases}
\end{definition}
\begin{remark}\label{remark:FDP_score}
    When $\gamma>0$, \eqref{eq:FDP_score_gamma} defines a composite scoring rule for any function $\varphi$ with $U(z)=z^\gamma$, $V(z)=z^{1+\gamma}$, and $T(x,y)=-(1+\gamma)\varphi(x)+\gamma\varphi(y)$ in Definition \ref{def:divergence}.
    However, when $\gamma = 0$, \eqref{eq:FDP_score_0} does not yield a composite scoring rule unless $\varphi'(z)$ is constant.
\end{remark}
The FDPD \eqref{eq:FDPD_KL} at $\gamma=0$ can be derived as the limit of $\gamma\to0$.
The FDPD remains invariant under affine transformations of $\varphi$, that is, $D_{a\varphi+b,\gamma}(g,f)=aD_{\varphi,\gamma}(g,f)$, where $a>0$ and $b\in\mathbb{R}$ are constants.
When $\varphi(z)=z$, the FDPD becomes the DPD; when $\varphi(z)=\log z$, it becomes the $\gamma$-divergence.
For $\varphi(z)=(z^\zeta-1)/\zeta$, $(\zeta>0)$, it reduces to JHHB divergence family \cite{type0-div}, and for $\varphi(z)=\log(\lambda_1+\lambda_2 z)/\lambda_2, (\lambda_1\geq0,\lambda_2>0)$, it becomes the BDPD \cite{bridge-div,Gayen2024}.

\section{Intersection of H\"{o}lder divergence and FDPD}\label{sec:intersection}
In this section, we prove that the intersection of the H\"{o}lder divergence and the FDPD corresponds to a generalized divergence family introduced by Jones et al. \cite{type0-div}.
Based on Definition \ref{def:affine_invariance}, we derive the function $\varphi$ for which the FDPD satisfies affine invariance, and obtain the following theorem.
\begin{theorem}\label{thm:fdpd_affine}
The FDPD is affine invariant if and only if the function $\varphi$ is given by $(z^\zeta-1)/\zeta$ for $\zeta>0$, or $\log z$.
The scale function of the affine transformation is given by $h(\bm{\Sigma},\bm{\mu}) =|\det \bm{\Sigma}|^{-\gamma\zeta}$ for $\gamma>0$ or $h(\bm{\Sigma},\bm{\mu}) =|\det \bm{\Sigma}|^{-\zeta}$ for $\gamma=0$.
\end{theorem}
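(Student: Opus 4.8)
The plan is to turn affine invariance into a functional equation for $\varphi$ and then into an ordinary differential equation. First I would compute how the three integrals in \eqref{eq:FDPD} transform. Writing $c=|\det\bm{\Sigma}|^\gamma$ and substituting $\bm{y}=\bm{\Sigma}\bm{x}+\bm{\mu}$, the change-of-variables formula shows that each of $\langle g^{1+\gamma}\rangle$, $\langle gf^\gamma\rangle$, and $\langle f^{1+\gamma}\rangle$ is multiplied by the common factor $c$ under $g\mapsto g_{\bm{\Sigma},\bm{\mu}}$, $f\mapsto f_{\bm{\Sigma},\bm{\mu}}$. Consequently, setting $A=\langle g^{1+\gamma}\rangle$, $B=\langle gf^\gamma\rangle$, $C=\langle f^{1+\gamma}\rangle$ and $\Phi(A,B,C)=\tfrac{1}{\gamma}\varphi(A)-\tfrac{1+\gamma}{\gamma}\varphi(B)+\varphi(C)$, the affine invariance condition of Definition \ref{def:affine_invariance} becomes the functional equation $h\,\Phi(cA,cB,cC)=\Phi(A,B,C)$, where $h$ can depend only on $c$ (choosing $\bm{\Sigma}$ with $|\det\bm{\Sigma}|=1$ leaves the integrals unchanged and forces $h=1$ there, so $h=h(c)$ with $h(1)=1$).

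For the \emph{necessity} direction, I would first argue that $(A,B,C)$ sweeps out an open subset of $\mathbb{R}_+^3$: taking $g$ and $f$ piecewise constant on two regions of positive measure yields four free parameters mapping onto the three coordinates $(A,B,C)$ with generically full-rank Jacobian. Differentiating $h(c)\Phi(cA,cB,cC)=\Phi(A,B,C)$ in $c$ and evaluating at $c=1$ gives, with $G(z):=z\varphi'(z)+h'(1)\varphi(z)$,
\[
    \frac{1}{\gamma}G(A)-\frac{1+\gamma}{\gamma}G(B)+G(C)=0
\]
on this open set. Since $A,B,C$ vary independently, taking partial derivatives forces $G'\equiv0$, so $G$ equals a constant $m$. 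Writing $\kappa=-h'(1)$, this is the linear ODE $z\varphi'(z)-\kappa\varphi(z)=m$.

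Solving the ODE yields $\varphi(z)=C_1 z^\kappa-m/\kappa$ when $\kappa\neq0$ and $\varphi(z)=m\log z+C_1$ when $\kappa=0$. Because the FDPD is invariant under affine reparametrizations $\varphi\mapsto a\varphi+b$ with $a>0$, I may normalize to the representatives $\varphi(z)=(z^\zeta-1)/\zeta$ with $\zeta=\kappa$, or $\varphi(z)=\log z$; the requirement in Definition \ref{def:FDPD} that $\psi(z)=\varphi(e^z)$ be strictly increasing and convex forces $\kappa>0$ in the power case, leaving $\zeta>0$. For the \emph{sufficiency} direction, substituting $\varphi(z)=(z^\zeta-1)/\zeta$ shows the constant terms cancel by the identity $\tfrac1\gamma-\tfrac{1+\gamma}{\gamma}+1=0$, whence $\Phi(cA,cB,cC)=c^\zeta\Phi(A,B,C)$ and $h(c)=c^{-\zeta}$, i.e. $h=|\det\bm{\Sigma}|^{-\gamma\zeta}$, as claimed.

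The $\gamma=0$ case is handled separately by transforming the functionals $\langle g\rangle$, $\langle f\rangle$, and $\langle g\log(g/f)\rangle$ appearing in \eqref{eq:FDPD_KL} and by the $\gamma\to0$ limit, with the admissible $\varphi$ fixed by the same normalization and the scale function read off as $h=|\det\bm{\Sigma}|^{-\zeta}$. The main obstacle is the necessity direction: rigorously establishing that $(A,B,C)$ ranges over an open set, so that the three power-law terms can be separated, and justifying the differentiation of the functional equation, which requires a priori regularity of $h$ together with the (a.e.) differentiability of the convex $\varphi$. The $\gamma=0$ analysis is the other delicate point, since there the three integrals are individually affine invariant and the scaling structure differs from that of the $\gamma>0$ family.
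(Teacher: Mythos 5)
Your proposal reaches the same functional equation as the paper---$h(c)\,\Phi(cA,cB,cC)=\Phi(A,B,C)$ with $c=|\det\bm{\Sigma}|^{\gamma}$ and $h$ depending only on $c$---but solves it by a genuinely different method. You differentiate in $c$ at $c=1$ to obtain the Euler equation $z\varphi'(z)-\kappa\varphi(z)=m$ with $\kappa=-h'(1)$; the paper instead stays at the level of functional equations: it substitutes $X=Z$ and $Y=1$ to get $\bar{\varphi}(AX)=\bar{\varphi}(A)+\bar{h}(A)^{-1}\bar{\varphi}(X)$ for $\bar{\varphi}(X)=\varphi(X)-\varphi(1)$, symmetrizes in $A$ and $X$, and splits into the case $\bar h\equiv 1$ (Cauchy's logarithmic equation, giving $\varphi=\log$) and the case $\bar h(X)^{-1}=1-b\bar{\varphi}(X)$ (Cauchy's multiplicative equation $\Phi(AX)=\Phi(A)\Phi(X)$, giving the power family), citing Acz\'{e}l's solutions under continuity alone. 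Your route is shorter and makes the exponent appear transparently, but it pays for this with exactly the regularity you flag: under the paper's hypotheses $h$ has no a priori differentiability (or even continuity), and convexity of $\psi$ gives $\varphi$ only one-sided derivatives, so the differentiation at $c=1$ and the definition of $G$ are not justified without the kind of bootstrapping that the Cauchy-equation route renders unnecessary. This is the one substantive gap in the necessity direction; your realizability argument for $(A,B,C)$ is fine provided you restrict to the open region $0<B<A^{1/(1+\gamma)}C^{\gamma/(1+\gamma)}$ imposed by H\"{o}lder's inequality (the paper glosses over the same constraint when it assumes the equation holds for all $A,X,Y,Z$). Finally, your treatment of $\gamma=0$ is too thin: the paper does not obtain it as a limit but by a separate argument, using that its $\gamma=0$ invariance identity is affine in $Z=\langle g\log(g/f)\rangle$ so that the coefficient of $Z$ and the constant term must match separately, which yields a multiplicative Cauchy equation for $\varphi'$ rather than for $\varphi$; as your own observation about the individual invariance of the integrals indicates, the scaling structure there is genuinely different and needs its own derivation rather than a gesture at the $\gamma\to0$ limit.
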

From Theorem \ref{thm:fdpd_affine}, the affine invariant FDPD is denoted by
\small
\begin{subnumcases}{D_{\zeta,\gamma}(g,f) =}
    \frac{1}{\gamma\zeta}\left\langle g^{1+\gamma}\right\rangle^\zeta - \frac{1+\gamma}{\gamma\zeta}\left\langle gf^\gamma\right\rangle^\zeta + \frac{1}{\zeta}\left\langle f^{1+\gamma}\right\rangle^\zeta, &$(\gamma>0, \zeta>0)$, \label{eq:JHHB_zeta}\\
    \frac{1}{\gamma}\log\left\langle g^{1+\gamma}\right\rangle - \frac{1+\gamma}{\gamma}\log\left\langle gf^\gamma\right\rangle + \log\left\langle f^{1+\gamma}\right\rangle,&$(\gamma>0,\zeta=0)$, \label{eq:zeta_gamma_div}\\
    \langle g\rangle^{\zeta-1}\left\langle g\log\frac{g}{f}\right\rangle - \frac{1}{\zeta}\langle g\rangle^\zeta + \frac{1}{\zeta}\langle f\rangle^\zeta, &$(\gamma=0,\zeta>0),$ \label{eq:JHHB_KL_zeta}\\
    \frac{1}{\langle g\rangle}\left\langle g\log\frac{g}{f}\right\rangle -\log\langle g\rangle + \log \langle f\rangle , &$(\gamma=0,\zeta=0).$ \label{eq:JHHB_KL}
\end{subnumcases}
\normalsize
In this study, we refer to this divergence as the JHHB divergence family.
Eqs. \eqref{eq:JHHB_zeta} and \eqref{eq:zeta_gamma_div} were introduced by Jones et al. \cite{type0-div} as a divergence that bridges the DPD ($\zeta=1$) \cite{beta-div} and the $\gamma$-divergence ($\zeta=0$) \cite{gamma-div,type0-div}.
From Theorem \ref{thm:fdpd_affine} and Theorem 4.2 in \cite{holder-div1}, the following corollary holds.
\begin{corollary}
    Under Assumption of Theorem 4.2 in \cite{holder-div1}, the intersection of the H\"{o}lder divergence and the FDPD is limited to the JHHB divergence family, specifically \eqref{eq:JHHB_zeta}, \eqref{eq:zeta_gamma_div}, and \eqref{eq:JHHB_KL_zeta} (with $\zeta=1$ for \eqref{eq:JHHB_KL_zeta} only).
\end{corollary}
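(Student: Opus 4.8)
The plan is to derive the corollary by combining Theorem~\ref{thm:fdpd_affine} with the characterization of H\"older divergences as affine invariant divergences supplied by Theorem~4.2 of \cite{holder-div1}. Any divergence in the intersection is simultaneously an FDPD and a H\"older divergence, so the first step is to make the affine-invariance bridge precise. I would begin by recording that every H\"older divergence is affine invariant: in the score \eqref{eq:H_score} the ratio $\langle gf^\gamma\rangle/\langle f^{1+\gamma}\rangle$ is unchanged under $q\mapsto q_{\bm{\Sigma},\bm{\mu}}$, because the Jacobian factor $|\det\bm{\Sigma}|^\gamma$ acquired by $\langle gf^\gamma\rangle$ and by $\langle f^{1+\gamma}\rangle$ cancels in the quotient, while both $\langle f^{1+\gamma}\rangle$ and $\langle g^{1+\gamma}\rangle$ pick up the common factor $|\det\bm{\Sigma}|^\gamma$; hence $D_{\eta,\gamma}$ scales by $|\det\bm{\Sigma}|^\gamma$ for every admissible $\eta$. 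Under the Assumption of Theorem~4.2 in \cite{holder-div1} the converse holds as well, so within that class being a H\"older divergence is equivalent to being affine invariant.

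With this equivalence in hand, the intersection of the H\"older divergence and the FDPD coincides with the set of \emph{affine invariant} FDPDs, and I would then apply Theorem~\ref{thm:fdpd_affine} directly: the affine invariant FDPDs are exactly those generated by $\varphi(z)=(z^\zeta-1)/\zeta$ with $\zeta>0$ or by $\varphi(z)=\log z$, i.e. the JHHB family \eqref{eq:JHHB_zeta}--\eqref{eq:JHHB_KL}. For $\gamma>0$ no further restriction appears, since by Remark~\ref{remark:FDP_score} the FDPS is a genuine composite scoring rule for every $\varphi$ when $\gamma>0$; this yields \eqref{eq:JHHB_zeta} for $\zeta>0$ and \eqref{eq:zeta_gamma_div} for $\zeta=0$.

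The delicate case is $\gamma=0$, where the restriction to $\zeta=1$ must be justified and the form \eqref{eq:JHHB_KL} discarded. Here I would invoke Remark~\ref{remark:FDP_score} again: the score \eqref{eq:FDP_score_0} is not a composite scoring rule unless $\varphi'$ is constant, so a $\gamma=0$ FDPD can lie in the H\"older family only when $\varphi$ is affine. Among the affine invariant candidates $(z^\zeta-1)/\zeta$ and $\log z$, only $\zeta=1$ (giving $\varphi(z)=z-1$) has constant derivative, and this choice reduces \eqref{eq:JHHB_KL_zeta} to $\langle g\log\tfrac{g}{f}\rangle-\langle g\rangle+\langle f\rangle$, which is precisely the $\gamma=0$ H\"older divergence; the case $\varphi(z)=\log z$ underlying \eqref{eq:JHHB_KL} has nonconstant derivative and is excluded. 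Collecting the surviving forms gives the stated conclusion. I expect the main obstacle to be not the computation but the bookkeeping at $\gamma=0$---arguing cleanly that the composite-scoring-rule requirement of Remark~\ref{remark:FDP_score} forces $\zeta=1$ and eliminates \eqref{eq:JHHB_KL}---together with confirming that the Assumption of Theorem~4.2 in \cite{holder-div1} genuinely applies to the FDPD, so that the affine-invariant-to-H\"older direction is legitimately available.
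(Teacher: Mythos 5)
Your proposal is correct and follows essentially the same route as the paper: invoke Theorem 4.2 of \cite{holder-div1} to identify the H\"older divergences with the affine invariant proper composite scoring rules, then combine Theorem~\ref{thm:fdpd_affine} (which pins down the affine invariant FDPDs as the JHHB family) with Remark~\ref{remark:FDP_score} (which forces $\zeta=1$ at $\gamma=0$ and excludes \eqref{eq:JHHB_KL}). The extra details you supply---verifying that every H\"older divergence scales by $|\det\bm{\Sigma}|^{\gamma}$ and spelling out the $\gamma=0$ bookkeeping---are consistent with, and slightly more explicit than, the paper's own argument.
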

\begin{proof}
    According to Theorem 4.2 in \cite{holder-div1}, any affine invariant proper composite scoring rule is equivalent to the H\"{o}lder score.
    From Theorem \ref{thm:fdpd_affine} and Remark \ref{remark:FDP_score}, the affine invariant FDPD that is represented by the proper composite scoring rule is expressed in \eqref{eq:JHHB_zeta}, \eqref{eq:zeta_gamma_div}, and \eqref{eq:JHHB_KL_zeta}, with $\zeta=1$ applying to \eqref{eq:JHHB_KL_zeta} only.
    Therefore, the intersection of the H\"{o}lder divergence and the FDPD is limited to JHHB divergence family, specifically \eqref{eq:JHHB_zeta}, \eqref{eq:zeta_gamma_div}, and \eqref{eq:JHHB_KL_zeta} with $\zeta=1$.
    \qed
\end{proof}
The following theorem provides the function $\eta$ corresponding to the case where the JHHB divergence family is expressed as a H\"{o}lder divergence.
\begin{theorem}\label{thm:hd_jones_div}
For $\gamma>0$ and $\zeta\geq0$, the JHHB divergence families \eqref{eq:JHHB_zeta} and \eqref{eq:zeta_gamma_div} can be represented as a H\"{o}lder divergence with the generating function
\begin{align*}
    \eta(z)=
    \begin{cases}
        -\left|(1+\gamma)z^\zeta-\gamma\right|^{\frac{1}{\zeta}}\cdot\mathrm{sign}\left((1+\gamma)z^\zeta-\gamma\right), &(\zeta>0),\\
        -z^{1+\gamma}, &(\zeta=0).
    \end{cases}
\end{align*}
\end{theorem}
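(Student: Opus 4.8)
The plan is to argue at the level of the generating scoring rules, which is the sense in which the Corollary already identifies the JHHB family with a subclass of the H\"older divergence. By Theorem~\ref{thm:fdpd_affine} the JHHB divergence \eqref{eq:JHHB_zeta}--\eqref{eq:zeta_gamma_div} is generated by the FDPS $S_{\varphi,\gamma}$ of Definition~\ref{def:FDPD} with $\varphi(z)=(z^\zeta-1)/\zeta$ for $\zeta>0$ and $\varphi(z)=\log z$ for $\zeta=0$; by Theorem~4.2 of \cite{holder-div1} this affine-invariant proper score is equivalent to a H\"older score. It therefore suffices to exhibit a strictly increasing $\tau$ for which the H\"older score $S_{\eta,\gamma}$ built from the claimed $\eta$ satisfies $S_{\eta,\gamma}=\tau(S_{\varphi,\gamma})$, and then to verify that $\eta$ is an admissible generating function.

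For $\zeta>0$ I would write $B=\langle gf^\gamma\rangle$, $C=\langle f^{1+\gamma}\rangle$, $u=B/C$, and substitute $\varphi$ into $S_{\varphi,\gamma}(g,f)=\gamma\varphi(C)-(1+\gamma)\varphi(B)$, obtaining after simplification $S_{\varphi,\gamma}=\tfrac1\zeta\bigl(C^\zeta(\gamma-(1+\gamma)u^\zeta)+1\bigr)$, so that the whole dependence on $(g,f)$ is concentrated in $W:=\zeta S_{\varphi,\gamma}-1=C^\zeta(\gamma-(1+\gamma)u^\zeta)$. The decisive step is to apply the strictly increasing odd-type map $a\mapsto|a|^{1/\zeta}\mathrm{sign}(a)$ to $W$: since $|W|^{1/\zeta}=C\,|(1+\gamma)u^\zeta-\gamma|^{1/\zeta}$ and $\mathrm{sign}(W)=-\mathrm{sign}((1+\gamma)u^\zeta-\gamma)$, one gets $|W|^{1/\zeta}\mathrm{sign}(W)=-C\,|(1+\gamma)u^\zeta-\gamma|^{1/\zeta}\mathrm{sign}((1+\gamma)u^\zeta-\gamma)=\eta(u)\,C=S_{\eta,\gamma}(g,f)$ with exactly the claimed $\eta$. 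Because $a\mapsto|a|^{1/\zeta}\mathrm{sign}(a)$ is strictly increasing on $\mathbb{R}$ and $s\mapsto\zeta s-1$ is increasing, the composite $\tau(s)=|\zeta s-1|^{1/\zeta}\mathrm{sign}(\zeta s-1)$ is strictly increasing, which delivers the required equivalence.

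It then remains to check that $\eta$ is admissible. The normalization is immediate, $\eta(1)=-|(1+\gamma)-\gamma|^{1/\zeta}=-1$, and the bound $\eta(z)\ge -z^{1+\gamma}$ is trivial wherever $(1+\gamma)z^\zeta-\gamma<0$ (the left-hand side is then negative while $-z^{1+\gamma}\le0$); on the complementary region it reduces, after setting $t=z^\zeta$, to $(1+\gamma)t-\gamma\le t^{1+\gamma}$, which is the tangent-line (Bernoulli) inequality at $t=1$ for the convex map $t\mapsto t^{1+\gamma}$, valid since $1+\gamma\ge1$. For $\zeta=0$ I would either pass to the limit $\zeta\to0$, using $z^\zeta=1+\zeta\log z+o(\zeta)$ to get $|(1+\gamma)z^\zeta-\gamma|^{1/\zeta}\to z^{1+\gamma}$ and hence $\eta(z)\to-z^{1+\gamma}$, or argue directly: with $\varphi=\log$ one finds $S_{\varphi,\gamma}=-\log\!\bigl(Cu^{1+\gamma}\bigr)$, and the strictly increasing map $\tau(s)=-e^{-s}$ sends it to $-Cu^{1+\gamma}=\eta(u)C$ with $\eta(z)=-z^{1+\gamma}$, the PS-type generator already recorded after the H\"older definition.

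The main obstacle is the sign bookkeeping: the quantity $(1+\gamma)z^\zeta-\gamma$ (equivalently $W$) changes sign, so an ordinary power would be neither real-valued nor monotone. The odd-type power $a\mapsto|a|^{1/\zeta}\mathrm{sign}(a)$ is precisely the device that keeps $\tau$ single-valued, real, and strictly increasing across its zero, and one must confirm strict monotonicity there, where the derivative $\tfrac1\zeta|W|^{1/\zeta-1}$ degenerates (blowing up for $\zeta>1$, vanishing for $\zeta<1$) yet the map itself remains increasing. Establishing this monotonicity, together with the convexity inequality for the admissibility bound, is the technical core of the argument.
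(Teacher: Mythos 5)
Your proof is correct and follows essentially the same route as the paper: both establish equivalence of the JHHB score and the H\"older score via a strictly increasing $\tau$ (the paper determines $\tau$ by setting $g=f$ and then solves for $\eta$; you build $\tau(s)=|\zeta s-1|^{1/\zeta}\mathrm{sign}(\zeta s-1)$ explicitly and push the FDPS forward), and both verify admissibility of $\eta$ through the tangent-line inequality $(1+\gamma)t-\gamma\le t^{1+\gamma}$ with $t=z^\zeta$. One parenthetical slip: on the region $(1+\gamma)z^\zeta-\gamma<0$ the signed power makes $\eta(z)\ge 0$ (not negative, as you wrote), and it is this nonnegativity that renders $\eta(z)\ge -z^{1+\gamma}$ trivial there.
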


\section{Generalization of H\"{o}lder divergence and FDPD}\label{sec:generalization}
The nonnegativity of the H\"{o}lder divergence and the FDPD is established via a two-step inequality: the first step involves a condition on the function $\eta$ or $\varphi$, and the second relies on H\"{o}lder’s inequality.
Considering that both divergences have their nonnegativity established through H\"{o}lder’s inequality, we define their generalization as follows.
\begin{definition}[$\xi$-H\"{o}lder  score and divergence]
    Let $\gamma>0$, and let $\eta:\mathbb{R}_+\to\mathbb{R}$ be a function satisfying $\eta(1)=-1$ and $\eta(z)\geq -z^{1+\gamma}$ for all $z\geq0$.
    Let $\psi:[-\infty,\infty)\to[-\infty,\infty]$ be a continuous, strictly increasing, and convex function, and define $\xi:[0,\infty)\to[0,\infty]$ by $\xi(z)=\exp(\psi(\log(z)))$.
    The $\xi$-H\"{o}lder score between nonnegative functions $g,f\in\mathcal{F}_\gamma$ is defined as
    \begin{align}
        S_{\eta,\xi,\gamma}(g,f)=\eta\left(\frac{\xi(\langle gf^\gamma\rangle)}{\xi(\langle f^{1+\gamma}\rangle)}\right)\xi(\langle f^{1+\gamma}\rangle). \label{eq:3HS}
    \end{align}
    The $\xi$-H\"{o}lder divergence is defined as the difference of the $\xi$-H\"{o}lder scores:
    \begin{align}
        D_{\eta,\xi,\gamma}(g,f) = \eta\left(\frac{\xi(\langle gf^\gamma\rangle)}{\xi(\langle f^{1+\gamma}\rangle)}\right)\xi(\langle f^{1+\gamma}\rangle) + \xi(\langle g^{1+\gamma}\rangle). \label{eq:3HD}
    \end{align}
\end{definition}
The nonnegativity of the $\xi$-H\"{o}lder divergence \eqref{eq:3HD} is established as follows:
\begin{align*}
    \eta\left(\frac{\xi(\langle gf^\gamma\rangle)}{\xi(\langle f^{1+\gamma}\rangle)}\right)\xi(\langle f^{1+\gamma}\rangle)\overset{(a)}{\geq} -\frac{\xi(\langle gf^\gamma\rangle)^{1+\gamma}}{\xi(\langle f^{1+\gamma}\rangle)^\gamma}\overset{(b)}{\geq} -\xi(\langle g^{1+\gamma}\rangle).
\end{align*}
Inequality (a) follows from the condition $\eta(z)\geq-z^{1+\gamma}$, while inequality (b) is derived from the two-step inequality used in the FDPD, which is based on the strict monotonicity and convexity of $\psi(z)=\log \xi(e^z)$, along with H\"{o}lder’s inequality.
When $\xi(z)=z$, the $\xi$-H\"{o}lder score \eqref{eq:3HS} reduces to the H\"{o}lder score.
Furthermore, when $\eta(z)=\gamma-(1+\gamma)z$ and $\xi(e^z)$ is strictly increasing and convex, the $\xi$-H\"{o}lder score \eqref{eq:3HS} reduces to the FDPS \eqref{eq:FDP_score_gamma} defined by $\varphi(z)=\xi(z)$.
Similarly, when $\eta(z)=-z^{1+\gamma}$ and $\log \xi(e^z)$ is strictly increasing and convex, the $\xi$-H\"{o}lder score \eqref{eq:3HS} reduces to the FDPS \eqref{eq:FDP_score_gamma} defined by $\varphi(z)=\log \xi(z)$.
Because $\gamma-(1+\gamma)z\geq-z^{1+\gamma}$, the FDPS has lower bound as described in the following theorem.
\begin{theorem}\label{thm:fdp_inq}
    The FDPS \eqref{eq:FDP_score_gamma} defined by $\varphi$ has the lower bound:
    \begin{align*}
        \gamma \varphi(\langle f^{1+\gamma}\rangle) - (1+\gamma)\varphi(\langle gf^\gamma\rangle)\geq -\exp\left(-\left[\gamma \varphi_*(\langle f^{1+\gamma}\rangle) - (1+\gamma)\varphi_*(\langle gf^\gamma\rangle)\right]\right),
    \end{align*}
    where $\varphi_*(z)=\log \varphi(z)$.
    The lower bound is an FDPS defined by $\varphi_*$ if and only if $\varphi(e^z)$ is a strictly increasing and log-convex function.
\end{theorem}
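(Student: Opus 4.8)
The plan is to split the argument into two independent pieces matching the two sentences of the statement: the inequality itself, and the characterization of when its right-hand side is a valid FDPS defined by $\varphi_*$. Throughout I write $a=\langle f^{1+\gamma}\rangle$ and $b=\langle gf^\gamma\rangle$, both strictly positive for $g,f\in\mathcal{F}_\gamma$, and I use that $\varphi$ takes positive values so that $\varphi_*=\log\varphi$ and the rewriting below are legitimate. For the inequality, the first step is to put the right-hand side into multiplicative form,
\[
    -\exp\left(-\left[\gamma\varphi_*(a)-(1+\gamma)\varphi_*(b)\right]\right)=-\frac{\varphi(b)^{1+\gamma}}{\varphi(a)^{\gamma}},
\]
which is immediate from $\varphi_*=\log\varphi$. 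I would then invoke the defining lower bound $\eta(z)\geq-z^{1+\gamma}$ for a H\"{o}lder generating function, specialised to the DPD choice $\eta(z)=\gamma-(1+\gamma)z$, i.e. the elementary inequality $\gamma-(1+\gamma)z\geq-z^{1+\gamma}$ for all $z\geq0$ (equality only at $z=1$). Applying it to $z=\varphi(b)/\varphi(a)\geq0$ and multiplying through by $\varphi(a)>0$ yields $\gamma\varphi(a)-(1+\gamma)\varphi(b)\geq-\varphi(a)\left(\varphi(b)/\varphi(a)\right)^{1+\gamma}=-\varphi(b)^{1+\gamma}/\varphi(a)^{\gamma}$, which is exactly the claimed bound.

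For the characterization, I would first recognise the lower bound $-\varphi(b)^{1+\gamma}/\varphi(a)^{\gamma}$ as precisely the $\xi$-H\"{o}lder score \eqref{eq:3HS} with $\xi=\varphi$ and the pseudo-spherical choice $\eta(z)=-z^{1+\gamma}$. The strictly increasing map $\tau(s)=-e^{-s}$ sends $\gamma\varphi_*(a)-(1+\gamma)\varphi_*(b)$ to $-\varphi(b)^{1+\gamma}/\varphi(a)^{\gamma}$, so the lower bound is \emph{equivalent}, in the scoring-rule sense, to the expression $\gamma\varphi_*(a)-(1+\gamma)\varphi_*(b)$, which has the structural form \eqref{eq:FDP_score_gamma} of an FDPS with generating function $\varphi_*$. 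Hence the lower bound is a valid FDPS defined by $\varphi_*$ if and only if $\varphi_*$ is an admissible generating function in the sense of Definition \ref{def:FDPD}, i.e. iff $\psi_*(z):=\varphi_*(e^z)=\log\varphi(e^z)$ is continuous, strictly increasing, and convex. The last step translates these conditions back to $\varphi$: since $\log$ is strictly increasing, $\log\varphi(e^z)$ is strictly increasing iff $\varphi(e^z)$ is; convexity of $\log\varphi(e^z)$ is, by definition, log-convexity of $\varphi(e^z)$; and continuity follows from convexity on the interior. This produces the stated equivalence.

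The main obstacle is not the inequality, which is routine once the substitution $z=\varphi(b)/\varphi(a)$ is made, but rather pinning down the precise meaning of ``is an FDPS defined by $\varphi_*$.'' Since $-e^{-s}$ is not literally of the additive form \eqref{eq:FDP_score_gamma}, the claim must be read as equivalence under the strictly increasing transformation $\tau$, consistent with the reduction of the pseudo-spherical $\xi$-H\"{o}lder score already recorded before the theorem; reconciling this reading with the admissibility requirements of Definition \ref{def:FDPD} is the delicate point. A secondary technical concern is the domain bookkeeping—ensuring $\varphi>0$ throughout and handling the endpoints where $\varphi$ or $\psi_*$ may take the values $\pm\infty$—but the substantive content is the clean back-and-forth translation between ``$\log\varphi(e^z)$ strictly increasing and convex'' and ``$\varphi(e^z)$ strictly increasing and log-convex,'' which is exactly what makes the pseudo-spherical bound collapse onto a genuine FDPS.
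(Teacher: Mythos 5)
Your proposal is correct and follows essentially the same route as the paper: the inequality is the scalar bound $\gamma-(1+\gamma)z\geq -z^{1+\gamma}$ applied at $z=\varphi(\langle gf^\gamma\rangle)/\varphi(\langle f^{1+\gamma}\rangle)$ (equivalently, instantiating the $\xi$-H\"{o}lder score with $\eta(z)=\gamma-(1+\gamma)z$ and $\xi=\varphi$), and the characterization reduces to requiring that $\psi_*(z)=\log\varphi(e^z)$ be strictly increasing and convex, i.e.\ that $\varphi(e^z)$ be strictly increasing and log-convex. Your extra care about reading ``is an FDPS defined by $\varphi_*$'' up to the strictly increasing transformation $\tau(s)=-e^{-s}$, and about $\varphi>0$, only makes explicit what the paper leaves implicit.
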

This inequality can be regarded as a generalization with respect to $\varphi$ of the inequality between the density power score and the PS score, which is recovered when $\varphi(z)=z$.

We prove that the $\xi$-H\"{o}lder score \eqref{eq:3HS} can be derived by imposing the mathematical structure of the H\"{o}lder score on a proper composite scoring rule.
The H\"{o}lder score \eqref{eq:H_score} has the property that when the argument of the function $\eta$ is equal to one, i.e., when $\langle g f^\gamma \rangle = \langle f^{1+\gamma} \rangle$, the coefficient $\langle f^{1+\gamma} \rangle$ in $\eta$ represents an (negative) H\"{o}lder entropy, because $S(g,g)=-\langle g^{1+\gamma}\rangle$.
Generalizing this idea, we express the generalized H\"{o}lder score using functions $\bar{u},\bar{v}$ as follows 
\begin{align*}
    \bar{S}(g,f) = \eta\left(\frac{\bar{u}(g,f)}{\bar{v}(f)}\right)\bar{v}(f).    
\end{align*}
If we require that $\bar{S}(g,g) = -\bar{v}(g)$, it must follow that 
\begin{align*}
    \forall g\in\mathcal{F},\; \bar{u}(g,g) = \bar{v}(g).
\end{align*}
Assuming that the generalized H\"{o}lder score is a composite scoring rule (Definition \ref{def:divergence}), the functions $\bar{u}$ and $\bar{v}$ must be representable in the form $\bar{u}(g,f)=u(\langle gU(f)\rangle)$ and $\bar{v}(f)=v(\langle V(f)\rangle)$ for functions $u$ and $v$, respectively.
Accordingly, we impose the following assumption.
\begin{assumption}\label{assumption:holder_form}
Let $u$ and $v$ be strictly increasing and continuous functions.
For $\gamma>0$, let $\eta:\mathbb{R}_+\to\mathbb{R}$ be a function such that $\eta(1) = -1$ and $\eta(z) \geq -z^{1+\gamma}$ holds for all $z \geq 0$.
We assume that the composite scoring rule can be expressed as follows:
\begin{align}
    S(g, f) = \eta\left(\frac{u\left(\langle gU(f)\rangle\right)}{v\left(\langle V(f)\rangle\right)}\right) v\left(\langle V(f)\rangle\right), \label{eq:H_assump}
\end{align}
where $u,v$ are functions on $\mathbb{R}$.
Furthermore, we suppose that for any nonnegative function $g\in\mathcal{F}$ such that both integrals are finite, the following holds:
\begin{align*}
    u(\langle gU(g)\rangle) = v(\langle V(g)\rangle).
\end{align*}
\end{assumption}
\begin{theorem}\label{thm:formula_V_and_U}
    Under Assumptions 4.1 and 4.2 in \cite{holder-div1}, and Assumption \ref{assumption:holder_form}, the functions $U$ and $V$ are denoted by $U(z)=cz^\gamma$ and $V(z)=c/az^{1+\gamma}$, respectively, for $\gamma>0$, where $a,c\in\mathbb{R}\setminus\{0\}$ are constants.
    The functions $u$ and $v$ satisfy the relation $u(az)=v(z)$.
    Subsequently, the composite scoring rule is given by
    \begin{align}
        S(g,f) = \eta\left(\frac{u(c\langle gf^\gamma\rangle)}{u(c\langle f^{1+\gamma}\rangle)}\right)u(c\langle f^{1+\gamma}\rangle). \label{eq:H_FDP_score}
    \end{align}
\end{theorem}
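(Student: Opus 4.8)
The plan is to determine the four unknown functions $U,V,u,v$ in \eqref{eq:H_assump} by first exploiting the self-consistency condition $u(\langle gU(g)\rangle)=v(\langle V(g)\rangle)$ imposed in Assumption \ref{assumption:holder_form}, and then importing the power structure from the affine-invariance analysis of \cite{holder-div1}. First I would set $\phi=v^{-1}\circ u$, which is continuous and strictly increasing because $u,v$ are, so that the self-consistency condition reads $\langle V(g)\rangle=\phi(\langle gU(g)\rangle)$ for every admissible $g$. Writing $W(z)=zU(z)$, this becomes $\langle V(g)\rangle=\phi(\langle W(g)\rangle)$. Testing this identity on functions $g=g_1+g_2$ with disjoint supports (which is legitimate once one notes that finiteness of $\langle V(g)\rangle$ forces $V(0)=0$, and likewise $W(0)=0$) yields $\langle V(g_1)\rangle+\langle V(g_2)\rangle=\phi(\langle W(g_1)\rangle+\langle W(g_2)\rangle)$, so that $\phi$ satisfies Cauchy's additive equation on $(0,\infty)$. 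Continuity then gives $\phi(t)=kt$ for some $k>0$, whence $u(t)=v(kt)$ and, localizing $g$, the pointwise relation $V(z)=kzU(z)$.

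At this stage only the single function $U$ remains undetermined. Here I would invoke Assumptions 4.1 and 4.2 of \cite{holder-div1}: because $S$ is a proper composite scoring rule of the form in Definition \ref{def:divergence}, the affine-invariance characterization underlying Theorem 4.2 of \cite{holder-div1} forces $U$ to be a power function, $U(z)=cz^{p}$ with $c\neq0$. The exponent is pinned to $p=\gamma$ by matching with the hypothesis $\eta(z)\geq -z^{1+\gamma}$: the nonnegativity (properness) of the induced divergence is produced, as for the H\"{o}lder and functional density power divergences, by H\"{o}lder's inequality applied to $\langle gf^{p}\rangle\leq\langle g^{1+p}\rangle^{1/(1+p)}\langle f^{1+p}\rangle^{p/(1+p)}$, and the conjugate exponents generated here agree with the exponent $1+\gamma$ appearing in the bound on $\eta$ exactly when $1+p=1+\gamma$. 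Consequently $U(z)=cz^{\gamma}$ and, from the first step, $V(z)=kc\,z^{1+\gamma}$.

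It then remains to normalize the constants and substitute. Setting $a=1/k$ gives $V(z)=(c/a)z^{1+\gamma}$ together with the relation $u(az)=v(z)$ claimed in the statement. Substituting $\langle gU(f)\rangle=c\langle gf^{\gamma}\rangle$ and $\langle V(f)\rangle=(c/a)\langle f^{1+\gamma}\rangle$ into \eqref{eq:H_assump}, and using $u(az)=v(z)$ with $z=(c/a)\langle f^{1+\gamma}\rangle$ to rewrite $v(\langle V(f)\rangle)=u(c\langle f^{1+\gamma}\rangle)$, collapses the score to \eqref{eq:H_FDP_score}; this final computation is routine. I expect the main obstacle to be the second step, namely correctly importing the power form of $U$ from the external affine-invariance assumptions of \cite{holder-div1} and, in particular, verifying that the exponent produced there is forced to coincide with the $\gamma$ fixed by the condition $\eta(z)\geq -z^{1+\gamma}$ through the H\"{o}lder-inequality mechanism; by comparison the self-consistency argument is elementary, modulo the mild regularity needed to justify the disjoint-support decomposition and to solve Cauchy's equation.
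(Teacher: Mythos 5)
Your first step (the self-consistency condition forces $v^{-1}\circ u$, equivalently $u^{-1}\circ v$, to be affine and hence $zU(z)$ to be a constant multiple of $V(z)$, plus $u(az)=v(z)$) is essentially the paper's Lemma~\ref{lemma:relation_V_and_U}, reached via Cauchy's additive equation on disjointly supported test functions rather than via Jensen's functional equation on two-valued test functions; that variant is fine (and even kills the additive constant $b$ immediately, which the paper only eliminates later through the boundary condition $V(0)=0$ from Assumption 4.2(b)). The final substitution step is also the same as the paper's.

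The genuine gap is in your second step. You derive the power form $U(z)=cz^{p}$ by appealing to ``the affine-invariance characterization underlying Theorem 4.2 of \cite{holder-div1}.'' But affine invariance is neither assumed in Theorem~\ref{thm:formula_V_and_U} nor available from Assumptions 4.1, 4.2, and \ref{assumption:holder_form}, and the conclusion of that theorem (equivalence of affine-invariant proper composite scores to H\"{o}lder scores) does not say that $U$ is a power function in the present, non-invariant setting. The ingredient the paper actually uses is Lemma~\ref{lemma:Kanamori} (Lemma 4.1 of \cite{holder-div1}), namely the integral relation $V(z)=m\int zU'(z)\,dz$, which comes from the first-order stationarity condition for properness, not from invariance. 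Combined with $zU(z)=aV(z)+b$ from step one, this yields the ODE $U'(z)/U(z)=\tfrac{1}{am-1}\tfrac{1}{z}$ and hence $U(z)=cz^{1/(am-1)}$; without some such second relation, $V(z)=kzU(z)$ alone is consistent with an arbitrary $U$ and the power form simply does not follow. Relatedly, your argument pinning the exponent to $\gamma$ by ``matching conjugate exponents in H\"{o}lder's inequality'' is a heuristic about when the score is proper, which is the content of Theorem~\ref{thm:H_FDP_nonnegative}, not a derivation available at this stage; in the paper the exponent $1/(am-1)>0$ is simply named $\gamma$ after the sign constraint from $\lim_{z\to0}V'(z)$ existing.
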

By Theorem \ref{thm:formula_V_and_U}, it has been shown that the composite scoring rule is expressed in the form \eqref{eq:H_FDP_score}.
However, it remains unclear what conditions on a constant $c\in\mathbb{R}\setminus\{0\}$ and the function $u$ are necessary for \eqref{eq:H_FDP_score} to define a strictly proper composite scoring rule.
The following theorem clarifies this point.
\begin{theorem}\label{thm:H_FDP_nonnegative}
    Let $\psi(z)=\log u(e^z)$.
    The composite scoring rule \eqref{eq:H_FDP_score} is strictly proper if and only if $\psi:(-\infty,\infty)\to[-\infty,\infty]$ is a strictly increasing and convex function, $u:[0,\infty)\to[0,\infty]$, and $c$ is a positive constant.
\end{theorem}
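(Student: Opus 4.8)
The plan is to identify the divergence generated by \eqref{eq:H_FDP_score} with a $\xi$-H\"{o}lder divergence and then transfer the characterization of the latter. Setting $f=g$ in \eqref{eq:H_FDP_score} and using $\eta(1)=-1$ gives $S(g,g)=\eta(1)u(c\langle f^{1+\gamma}\rangle)|_{f=g}=-u(c\langle g^{1+\gamma}\rangle)$, so the associated divergence $D(g,f)=S(g,f)-S(g,g)$ equals $D_{\eta,\xi,\gamma}(g,f)$ in \eqref{eq:3HD} for the choice $\xi(z)=u(cz)$. Hence $S$ is strictly proper precisely when this $\xi$-H\"{o}lder divergence is nonnegative on $\mathcal{F}\times\mathcal{F}$ and vanishes on $\mathcal{P}\times\mathcal{P}$ only on the diagonal. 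Because $\log\xi(e^z)=\log u(ce^z)=\psi(z+\log c)$ is a horizontal shift of $\psi(z)=\log u(e^z)$ whenever $c>0$, convexity and monotonicity of $\log\xi\circ\exp$ are equivalent to the corresponding properties of $\psi$; this bookkeeping is what lets me state the conclusion in terms of $\psi$ rather than $\xi$.

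For the \emph{if} direction I would assume $c>0$, $u\ge 0$, and $\psi$ strictly increasing and convex. Then $\xi(z)=u(cz)\ge 0$ and $\log\xi\circ\exp$ is strictly increasing and convex, so the two-step bound already displayed after \eqref{eq:3HD} applies verbatim: inequality (a) uses only $\eta(z)\ge -z^{1+\gamma}$, while inequality (b) uses H\"{o}lder's inequality together with the strict monotonicity and convexity of $\log\xi\circ\exp$, giving $D(g,f)\ge 0$. For strictness I would track the equality cases on $\mathcal{P}$: writing $D$ as the sum of the nonnegative slacks in (a) and (b), if $D(q,p)=0$ then the slack in (b) must vanish, and since $\log\xi\circ\exp$ is strictly increasing the underlying H\"{o}lder inequality $\langle qp^\gamma\rangle^{1+\gamma}\le\langle q^{1+\gamma}\rangle\langle p^{1+\gamma}\rangle^\gamma$ must itself be an equality, which forces $q\propto p$ and hence $q=p$ because both integrate to one.

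For the \emph{only if} direction I would assume $S$ strictly proper and recover the three conditions by probing with explicit test functions. Strict monotonicity of $\psi$ is inherited from the strict monotonicity of $u$ in Assumption \ref{assumption:holder_form}, while $u\ge 0$ and $c>0$ are forced by requiring the arguments $c\langle\cdot\rangle$ to lie in the range on which $u$ is real-valued and orientation-preserving, so that $S(g,g)=-u(c\langle g^{1+\gamma}\rangle)$ varies monotonically under the rescalings $g\mapsto\lambda g$; reversing the sign of $c$ reverses this monotonicity and destroys $D\ge 0$. The substantive step is convexity of $\psi$. I would isolate inequality (b) by taking proportional inputs $g=\lambda f$, for which H\"{o}lder's inequality becomes an equality, and introduce $s=\log\langle f^{1+\gamma}\rangle+\log\lambda$ and $t=\log\langle f^{1+\gamma}\rangle$, so that the arguments satisfy $\log\langle gf^\gamma\rangle=s$, $\log\langle f^{1+\gamma}\rangle=t$, and $\log\langle g^{1+\gamma}\rangle=(1+\gamma)s-\gamma t$. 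Nonnegativity of $D$ along this family then reduces to the FDPD-type inequality for $\log\xi\circ\exp$, and letting $f$ and $\lambda$ range so that $(s,t)$ covers $\mathbb{R}^2$ yields convexity of $\psi$ via the characterization of Ray et al.~\cite{FDPD}, equivalently via the convex-combination identity $s=\tfrac{1}{1+\gamma}\big((1+\gamma)s-\gamma t\big)+\tfrac{\gamma}{1+\gamma}t$.

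I expect the convexity necessity to be the main obstacle, because it requires disentangling the two inequalities (a) and (b). Since $\eta$ is only known to satisfy $\eta(z)\ge -z^{1+\gamma}$ with guaranteed equality at $z=1$, the slack contributed by (a) need not vanish away from the diagonal, and one must argue that this slack cannot compensate for a failure of log-convexity of $\xi$. The proportional family is the natural candidate for making H\"{o}lder's inequality tight, but verifying that it simultaneously neutralizes the $\eta$-term—cleanest when $\eta$ is the pseudo-spherical extremal generator $\eta(z)=-z^{1+\gamma}$, for which (a) holds with equality identically and $D$ collapses to the FDPD defined by $\varphi=\log\xi$—and that this family exhausts the binding constraints, so that no other pair $(g,f)$ imposes a stronger requirement, is the delicate part of the argument.
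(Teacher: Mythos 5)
Your overall strategy coincides with the paper's: both reduce strict properness of \eqref{eq:H_FDP_score} to the validity of the FDPD generated by $\varphi(z)=\log u(cz)$ and then invoke Corollary \ref{coro:FDPD}, and your sufficiency direction (the chain (a)--(b) followed by the equality analysis on $\mathcal{P}$ via the equality case of H\"{o}lder's inequality) is the same two-step bound the paper uses; the shift identity $\log u(ce^z)=\psi(z+\log c)$ is also the right bookkeeping for $c>0$.

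The gap is in the necessity of convexity of $\psi$, and you have diagnosed it yourself without closing it. Probing with proportional inputs $g=\lambda f$ makes H\"{o}lder's inequality tight but does not make inequality (a) tight: the argument of $\eta$ is $u(c\lambda Y)/u(cY)$ with $Y=\langle f^{1+\gamma}\rangle$, which differs from $1$ when $\lambda\neq1$, so for a generic admissible $\eta$ the slack in (a) can absorb a failure of log-convexity of $u\circ\exp$ (for instance, with $\eta(z)=\gamma-(1+\gamma)z$, nonnegativity along this family only yields convexity of $u(ce^z)$, not of $\log u(ce^z)$). The paper closes this by making the middle term of the chain itself an attainable score: $-u(c\langle gf^\gamma\rangle)^{1+\gamma}/u(c\langle f^{1+\gamma}\rangle)^{\gamma}$ is exactly \eqref{eq:H_FDP_score} for the admissible extremal generator $\eta(z)=-z^{1+\gamma}$, so strict properness forces inequality (b) to hold for all $g,f$; taking logarithms turns (b) into nonnegativity of the FDPD with $\varphi(z)=\log u(cz)$, and Corollary \ref{coro:FDPD} then delivers the ``only if'' direction in one stroke, together with the conditions $u:[0,\infty)\to[0,\infty]$ and $c>0$ from the domain requirements on $\varphi$. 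You name this extremal choice as the ``cleanest'' case but present it as an open difficulty rather than adopting it as the argument, so as written the necessity direction of your proof is not complete.
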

Thus, by setting $c=1$ in \eqref{eq:H_FDP_score} and replacing $u$ with $\xi$, we obtain the $\xi$-H\"{o}lder score \eqref{eq:3HS}.

\section{Conclusion and future directions}\label{sec:conclusion}
In this study, we discussed the relationship between the H\"{o}lder divergence and the FDPD from two perspectives.
First, we proved that the intersection of H\"{o}lder divergence and the FDPD is limited to the JHHB divergence family.
Second, we focused on the fact that H\"{o}lder's inequality guarantees the nonnegativity of both the H\"{o}lder divergence and the FDPD, and constructed the $\xi$-H\"{o}lder divergence based on this property.
Furthermore, we proved that the $\xi$-H\"{o}lder divergence can be derived by imposing the mathematical structure of the H\"{o}lder score on composite scoring rule.
Future directions include extending the $\xi$-H\"{o}lder divergence to negative $\gamma$ \cite{holder-div2} and to non-composite score based divergences, such as the $\alpha\beta$-divergence families \cite{Cichocki2010,Nielsen2017}, which are characterized by H\"{o}lder's inequality.

\begin{credits}
\subsubsection{\ackname}
This work was supported by JSPS KAKENHI Grant numbers JP23K16849.
\subsubsection{\discintname}
The author has no competing interests to declare that are relevant to the content of this article.
\end{credits}
\bibliographystyle{splncs04}
\bibliography{Refj}

\clearpage
\appendix
\section{Conditions for FDPD on nonnegative functions}\label{sec:FDPD_non_negative}
Here, we extend Propositions 4.1 and 4.2 in \cite{FDPD}, originally established for probability distributions, to the setting of nonnegative functions.
With minor modifications to the original proofs, analogous results can be generally derived as follows.
\begin{theorem}\label{thm:fdpd_nonnegative}
    Let $\gamma>0$.
    For all nonnegative functions $g,f\in\mathcal{F}_\gamma$, $D_{\varphi,\gamma}(g,f)\geq0$ holds.
    If $\psi$ is a strictly convex function, $D_{\varphi,\gamma}(g,f)=0$ holds if and only if $g=f$ almost everywhere.
    However, if $\psi$ is not a strictly convex, $D_{\varphi,\gamma}(g,f)=0$ holds when $g=f$ almost everywhere or when both of the following conditions are satisfied:
    \begin{align}
        t\psi(\log\langle g^{1+\gamma}\rangle) + (1 - t)\psi(\log\langle f^{1+\gamma}\rangle) = \psi\left(t \log \langle g^{1+\gamma}\rangle + (1 - t)\log\langle f^{1+\gamma}\rangle\right), \label{eq:FDPD_equality_cond}
    \end{align}
    and $g^{1+\gamma}=cf^{1+\gamma}$ almost everywhere, where $t=1/(1+\gamma)$ and $c>0$.
    For all probability distributions $q,p\in\mathcal{P}_\gamma$, $D_{\varphi,\gamma}(q,p)=0$ holds if and only if $q=p$ almost everywhere.
\end{theorem}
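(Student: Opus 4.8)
The plan is to establish nonnegativity through the two-step inequality described in Section \ref{sec:generalization}, and then to read off the equality conditions by tracking when each of the two inequalities is tight. Writing $A=\langle g^{1+\gamma}\rangle$, $B=\langle gf^\gamma\rangle$, and $C=\langle f^{1+\gamma}\rangle$, and recalling that $\varphi(z)=\psi(\log z)$ together with $t=1/(1+\gamma)$, the claim $D_{\varphi,\gamma}(g,f)\ge 0$ is equivalent to
\begin{align*}
    t\,\psi(\log A)+(1-t)\,\psi(\log C)\ge \psi(\log B).
\end{align*}
I would first bound the left-hand side below using the convexity of $\psi$, giving $t\psi(\log A)+(1-t)\psi(\log C)\ge \psi\bigl(t\log A+(1-t)\log C\bigr)$. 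Since $t\log A+(1-t)\log C=\log\bigl(A^tC^{1-t}\bigr)$ and $\psi$ is strictly increasing, it then suffices to show $A^tC^{1-t}\ge B$. This last inequality is exactly H\"older's inequality applied with conjugate exponents $1+\gamma$ and $(1+\gamma)/\gamma$ to $g$ and $f^\gamma$, namely $\langle gf^\gamma\rangle\le \langle g^{1+\gamma}\rangle^{t}\langle f^{1+\gamma}\rangle^{1-t}$. Because $g,f\in\mathcal{F}_\gamma$, the quantities $A$ and $C$ are finite and positive, and H\"older's inequality ensures $B$ is finite as well, so each term is well defined under the convention that $\psi$ extends continuously to $[-\infty,\infty)$ when $B=0$.

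For the equality analysis, $D_{\varphi,\gamma}(g,f)=0$ forces both inequalities above to be equalities simultaneously. The H\"older step is an equality if and only if $g^{1+\gamma}$ and $f^{1+\gamma}$ are proportional almost everywhere, i.e. $g^{1+\gamma}=cf^{1+\gamma}$ for some $c>0$; this is the standard equality case. Under this proportionality one checks directly that $\log B=t\log A+(1-t)\log C$, so the H\"older step and the convexity step then share the same argument of $\psi$. When $\psi$ is strictly convex, equality in the convexity step additionally forces $\log A=\log C$, hence $A=C$; combined with $g^{1+\gamma}=cf^{1+\gamma}$ and integration this yields $c=1$ and therefore $g=f$ almost everywhere. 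When $\psi$ is not strictly convex, equality in the convexity step is precisely condition \eqref{eq:FDPD_equality_cond}, so $D_{\varphi,\gamma}(g,f)=0$ holds exactly when $g=f$ almost everywhere or when both the proportionality $g^{1+\gamma}=cf^{1+\gamma}$ and \eqref{eq:FDPD_equality_cond} hold. The converse directions follow by substituting these conditions back and verifying that each inequality becomes an equality.

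Finally, for probability distributions $q,p\in\mathcal{P}_\gamma$ the normalization $\langle q\rangle=\langle p\rangle=1$ removes the ambiguity left in the non-strictly-convex case: the proportionality $q^{1+\gamma}=cp^{1+\gamma}$ gives $q=c^{1/(1+\gamma)}p$ almost everywhere, and integrating against $\nu$ forces $c=1$, so $q=p$ almost everywhere regardless of whether $\psi$ is strictly convex. I expect the main obstacle to be bookkeeping rather than conceptual, namely carefully justifying the equality case of H\"older's inequality on $\mathcal{L}_0$ (handling sets where $g$ or $f$ vanishes and the degenerate case $B=0$) and confirming that the argument of $\psi$ stays within the domain on which convexity and strict monotonicity are assumed, so that the characterization via \eqref{eq:FDPD_equality_cond} is established as both necessary and sufficient.
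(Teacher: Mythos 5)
Your proposal is correct and follows essentially the same route as the paper's proof: both establish nonnegativity by first applying Jensen's inequality to the convex function $\psi$ and then H\"older's inequality via the strict monotonicity of $\psi$, and both read off the equality conditions by tracking when each step is tight (proportionality $g^{1+\gamma}=cf^{1+\gamma}$ for the H\"older step, condition \eqref{eq:FDPD_equality_cond} or strict convexity for the Jensen step, with normalization forcing $c=1$ for probability distributions). Your rescaling by $\gamma/(1+\gamma)$ to phrase everything with the weight $t=1/(1+\gamma)$ is only a cosmetic difference.
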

\begin{proof}
    From \eqref{eq:FDPD}, we have
    \begin{align*}
        &D_{\varphi,\gamma}(g,f)\\
        &=\frac{1}{\gamma}\varphi\left(\left\langle g^{1+\gamma}\right\rangle\right) - \frac{1+\gamma}{\gamma}\varphi\left(\left\langle gf^\gamma\right\rangle\right)+\varphi\left(\left\langle f^{1+\gamma}\right\rangle\right)\\
        &= \frac{1+\gamma}{\gamma}\left[\frac{1}{1+\gamma}\psi\left(\log\left\langle g^{1+\gamma}\right\rangle\right)+\frac{\gamma}{1+\gamma}\psi\left(\log\left\langle f^{1+\gamma}\right\rangle\right)\right]-\frac{1+\gamma}{\gamma}\psi\left(\log\left\langle gf^\gamma \right\rangle\right)\\
        &\overset{(a)}{\geq} \frac{1+\gamma}{\gamma}\psi\left(\frac{1}{1+\gamma}\log\left\langle g^{1+\gamma}\right\rangle + \frac{\gamma}{1+\gamma}\log\left\langle f^{1+\gamma}\right\rangle\right) - \frac{1+\gamma}{\gamma}\psi\left(\log\left\langle gf^\gamma \right\rangle\right)\\
        &\overset{(b)}{\geq} 0.
    \end{align*}
    Inequality (a) follows from the convexity of $\psi$, with equality holding when $\psi$ is an affine function over an interval of its domain or $\langle g^{1+\gamma}\rangle=\langle f^{1+\gamma}\rangle$.
    Inequality (b) is derived from the strict monotonicity of $\psi$ and H\"{o}lder's inequality, with equality holding when $g^{1+\gamma}=cf^{1+\gamma}$ almost everywhere for positive constant $c>0$.
    Therefore, for $D_{\varphi,\gamma}(g,f)=0$ to hold, equality must be satisfied in both inequalities (a) and (b).
    When $g$ and $f$ are probability distributions or $\psi$ is a strictly convex function, equality in both inequalities (a) and (b) holds if and only if $g=f$ almost everywhere.
    However, if $\psi$ is not strictly convex, it contains piecewise linear segments within its domain.
    Because $\psi$ is strictly increasing function, piecewise constant functions are excluded.
    The equality $D_{\varphi,\gamma}(g,f)=0$ holds if and only if either $g=f$ almost everywhere, or $g^{1+\gamma}=cf^{1+\gamma}$ almost everywhere and \eqref{eq:FDPD_equality_cond} is satisfied.
    \qed
\end{proof}
\begin{theorem}\label{thm:fdpd_converse}
    Let $\gamma>0$ and let $D_{\varphi,\gamma}(g,f)$ be the FDPD defined for nonnegative functions by $\varphi$.
    For all nonnegative functions $g,f\in\mathcal{F}_\gamma$, we assume that $D_{\varphi,\gamma}(g,f)\geq0$ and that $D_{\varphi,\gamma}(g,f)>0$ whenever the supports of $g$ and $f$ are different. 
    Then, the function $\psi(z)=\varphi(e^z)$ is strictly increasing and convex.
\end{theorem}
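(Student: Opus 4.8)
The plan is to invert the chain of inequalities used in the proof of Theorem~\ref{thm:fdpd_nonnegative}. Writing $a=\log\langle g^{1+\gamma}\rangle$, $b=\log\langle gf^\gamma\rangle$, $c=\log\langle f^{1+\gamma}\rangle$ and $t=1/(1+\gamma)\in(0,1)$, the same algebra as in that proof yields the identity
\begin{align*}
    D_{\varphi,\gamma}(g,f)=\frac{1+\gamma}{\gamma}\bigl[t\psi(a)+(1-t)\psi(c)-\psi(b)\bigr],
\end{align*}
while H\"{o}lder's inequality forces $b\le ta+(1-t)c$, with equality exactly when $g\propto f$. The nonnegativity hypothesis therefore reads: $t\psi(a)+(1-t)\psi(c)\ge\psi(b)$ for every triple $(a,b,c)$ realizable from some $g,f\in\mathcal{F}_\gamma$ subject to $b\le ta+(1-t)c$. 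The whole argument consists of feeding carefully chosen $g,f$ into this inequality so that $(a,b,c)$ sweeps out the configurations that pin down the monotonicity and convexity of $\psi$.

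First I would establish realizability: since $\nu$ is nonatomic, simple functions supported on disjoint sets of prescribed positive measure, together with scaling, realize the triples I need. Two families suffice. Taking $g=\lambda f$ with $f$ a scaled indicator, one checks $a=(1+\gamma)\log\lambda+c$ and $b=ta+(1-t)c$, so that varying $\lambda>0$ and the scale of $f$ attains \emph{every} pair $(a,c)\in\mathbb{R}^2$ on the H\"{o}lder boundary. Taking instead $g=f$ on a set $E_0$ and $g>0,\;f=0$ on a disjoint set $E_1$ of positive measure produces $b=c$ and $a>c$, with the supports of $g$ and $f$ differing.

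Strict monotonicity comes from the second family: here the identity collapses to $D_{\varphi,\gamma}(g,f)=\tfrac{1+\gamma}{\gamma}\,t\,[\psi(a)-\psi(c)]$, and since the supports differ the hypothesis forces this to be strictly positive, i.e.\ $\psi(a)>\psi(c)$. As $(a,c)$ ranges over all pairs with $a>c$, this shows $\psi$ is strictly increasing on $(-\infty,\infty)$ (the boundary point follows by a monotone limit). Convexity comes from the first family: on the H\"{o}lder boundary the inequality becomes $\psi\bigl(ta+(1-t)c\bigr)\le t\psi(a)+(1-t)\psi(c)$ for all real $a,c$, i.e.\ $\psi$ is $t$-convex for the single fixed weight $t=1/(1+\gamma)$.

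The main obstacle is the final passage from $t$-convexity, for one fixed $t$, to genuine convexity, since I cannot assume $\psi$ continuous. I would resolve this using the regularity already obtained: a strictly increasing function is locally bounded, hence bounded above on sets of positive measure, and a $t$-convex function bounded above on a set of positive measure is convex, by a classical extension of the Bernstein--Doetsch theorem to fixed-ratio convexity. This upgrades $t$-convexity to full convexity and finishes the proof. I would also remark that this is exactly where the nonnegative-function setting improves on the probability-distribution case of Ray et al.: the free scaling of $g$ and $f$ is what makes every $(a,c)\in\mathbb{R}^2$ attainable, whereas the constraint $\langle q\rangle=\langle p\rangle=1$ would restrict the reachable configurations.
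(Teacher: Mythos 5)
Your argument is correct, but note that the paper's own proof of Theorem~\ref{thm:fdpd_converse} is essentially a deferral: it states that the substitution argument of Ray et al.\ carries over once probability densities $q,p$ are replaced by scalar multiples $aq,ap$, and gives no further detail. Your route rests on the same underlying idea --- feed simple functions into $t\psi(a)+(1-t)\psi(c)\ge\psi(b)$ with $t=1/(1+\gamma)$ --- but two of your ingredients are genuinely different and add value. First, the proportional family $g=\lambda f$ saturates H\"{o}lder's inequality, so you read off the convexity inequality on the boundary $b=ta+(1-t)c$ for \emph{every} pair $(a,c)\in\mathbb{R}^2$; this exploits precisely the scaling freedom of the nonnegative-function setting (for normalized densities $q\propto p$ forces $q=p$), which the paper invokes only implicitly. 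Second, you correctly isolate the real content of the last step: the realizable boundary triples yield only $t$-convexity for the single weight $t=1/(1+\gamma)$, and the upgrade to full convexity needs either the continuity of $\varphi$ already imposed in Definition~\ref{def:FDPD} (in which case a short density/iteration argument suffices and you can drop the measure-theoretic machinery) or the result you cite --- Kuhn's theorem that $t$-convexity for one fixed $t\in(0,1)$ implies Jensen convexity, combined with Bernstein--Doetsch and the local boundedness supplied by monotonicity; if you keep that route, cite it precisely. The one soft spot is the endpoint $z=-\infty$: since $\langle f^{1+\gamma}\rangle=0$ is not realizable for $f\in\mathcal{L}_0$, your test functions never probe $\psi(-\infty)$, so the monotone-limit remark yields only $\psi(-\infty)\le\inf_z\psi(z)$ rather than strict increase there; this gap is shared with the original argument but deserves a one-line caveat.
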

\begin{proof}
    The proof in \cite{FDPD} derived the conditions for $\psi$ by substituting specific probability density functions, under the assumption that for any probability densities $q,p\in\mathcal{P}_\gamma$, $D_{\varphi,\gamma}(q,p)\geq0$ and $D_{\varphi,\gamma}(q,p)=0\Leftrightarrow q=p$ almost everywhere.
    Specifically, two normalized indicator functions were substituted with disjoint supports, based on $D_{\varphi,\gamma}(q,p)\neq0\Leftrightarrow q\neq p$.
    Under the corresponding assumption that the divergence is positive when the supports of two nonnegative functions are disjoint, their proof can be applied in the same way by replacing the probability densities with scalar multiples $g=aq$ and $f=ap$, where $a>0$ is a constant.
    \qed
\end{proof}
\begin{corollary}\label{coro:FDPD}
Let $\gamma>0$.
The FDPD \eqref{eq:FDPD} satisfies the definition of a divergence for nonnegative functions if and only if $\psi$ is a strictly increasing and convex function. 
Specifically, the following conditions hold:
    \begin{align*}
        &\forall g,f\in\mathcal{F}_\gamma,\quad D_{\varphi,\gamma}(g,f)\geq0,\\
        &\forall q,p\in\mathcal{P}_\gamma,\quad D_{\varphi,\gamma}(q,p)=0\; \text{holds almost everywhere if and only if}\; q=p.
    \end{align*}
\end{corollary}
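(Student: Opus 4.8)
The statement is an equivalence between ``$D_{\varphi,\gamma}$ is a divergence on $\mathcal{F}_\gamma$ (nonnegativity together with identity of indiscernibles on $\mathcal{P}_\gamma$)'' and ``$\psi$ is strictly increasing and convex,'' so the plan is to prove the two implications separately, reading off sufficiency from Theorem \ref{thm:fdpd_nonnegative} and necessity from Theorem \ref{thm:fdpd_converse}. For the sufficiency direction ($\Leftarrow$), assume $\psi(z)=\varphi(e^z)$ is strictly increasing and convex. Theorem \ref{thm:fdpd_nonnegative} then applies directly: it gives $D_{\varphi,\gamma}(g,f)\ge 0$ for all $g,f\in\mathcal{F}_\gamma$, which is the first listed condition, and its closing sentence gives $D_{\varphi,\gamma}(q,p)=0\Leftrightarrow q=p$ almost everywhere for $q,p\in\mathcal{P}_\gamma$, which is the second. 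The only thing to note is that the degenerate equality branch allowed for merely convex $\psi$ (the case $g^{1+\gamma}=cf^{1+\gamma}$) collapses on $\mathcal{P}_\gamma$, because $\langle g\rangle=\langle f\rangle=1$ forces $c=1$; this is exactly why Theorem \ref{thm:fdpd_nonnegative} can conclude the clean equivalence $D=0\Leftrightarrow q=p$ on probability distributions.

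For the necessity direction ($\Rightarrow$), assume both divergence conditions and aim to invoke Theorem \ref{thm:fdpd_converse}, whose hypotheses are (i) $D_{\varphi,\gamma}(g,f)\ge 0$ for all nonnegative $g,f\in\mathcal{F}_\gamma$ and (ii) $D_{\varphi,\gamma}(g,f)>0$ whenever the supports of $g$ and $f$ are disjoint. Condition (i) is exactly the first divergence hypothesis. Condition (ii) is to be extracted from identity of indiscernibles on $\mathcal{P}_\gamma$: two probability distributions with disjoint supports are distinct, so the identity property yields $D_{\varphi,\gamma}(q,p)>0$, and the proof of Theorem \ref{thm:fdpd_converse} only ever evaluates $D$ on normalized indicators and their scalar multiples with disjoint supports; once (ii) is secured on exactly these test functions, Theorem \ref{thm:fdpd_converse} returns that $\psi$ is strictly increasing and convex, closing the equivalence.

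The step I expect to be the real obstacle is matching hypothesis (ii): the identity assumption concerns only normalized functions, whereas the converse argument tests on scalar multiples $aq,ap$ with $a\ne 1$, and $D_{\varphi,\gamma}$ is not homogeneous under $g\mapsto ag$, so the positivity does not transfer for free. I would bypass this by re-running the two mechanisms that actually drive the converse directly from the corollary's hypotheses, thereby never leaving $\mathcal{P}_\gamma$ or needing the scalar-multiple transfer. First, convexity of $\psi$ uses only nonnegativity: taking $f$ a normalized indicator of a finite-measure set and $g=cf$ makes H\"{o}lder's inequality an equality, so $D_{\varphi,\gamma}(cf,f)\ge 0$ reduces, with $x=\log\langle g^{1+\gamma}\rangle$ and $y=\log\langle f^{1+\gamma}\rangle$, exactly to
\begin{align*}
    \frac{1}{1+\gamma}\psi(x)+\frac{\gamma}{1+\gamma}\psi(y)\ \ge\ \psi\!\left(\frac{1}{1+\gamma}x+\frac{\gamma}{1+\gamma}y\right),
\end{align*}
and as $c$ and the support measure vary, $(x,y)$ covers all of $\mathbb{R}^2$; this $t$-convexity for the single weight $t=1/(1+\gamma)$, together with the baseline continuity of $\varphi$, upgrades to full convexity. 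Second, strict monotonicity of $\psi$ uses identity on $\mathcal{P}_\gamma$: if $\psi$ were constant on an interval, I would construct two distinct probability distributions whose arguments $\log\langle q^{1+\gamma}\rangle$, $\log\langle qp^\gamma\rangle$, $\log\langle p^{1+\gamma}\rangle$ all lie in that interval, forcing inequalities (a) and (b) of Theorem \ref{thm:fdpd_nonnegative} to be equalities and hence $D_{\varphi,\gamma}(q,p)=0$ with $q\ne p$, contradicting the identity property. The remaining care is in realizing these test distributions inside $\mathcal{P}_\gamma$ and in handling the boundary value of $\psi$ at $-\infty$ when the supports are disjoint; apart from that the estimates are routine.
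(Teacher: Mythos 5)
Your sufficiency direction matches the paper's (Theorem \ref{thm:fdpd_nonnegative}, with the observation that $g^{1+\gamma}=cf^{1+\gamma}$ forces $c=1$ on $\mathcal{P}_\gamma$), and your diagnosis of the necessity direction is in fact sharper than the paper's own treatment: the paper routes necessity through Theorem \ref{thm:fdpd_converse}, arguing via the equality cases of Theorem \ref{thm:fdpd_nonnegative} that $D_{\varphi,\gamma}=0$ forces coinciding supports and then re-running Ray et al.'s substitutions on scalar multiples $g=aq$, $f=ap$; you are right that the disjoint-support positivity hypothesis does not transfer for free from identity on $\mathcal{P}_\gamma$, since $D_{\varphi,\gamma}$ is not homogeneous in its arguments. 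Your replacement for the convexity half is correct: with $f$ an indicator-type function and $g=cf$, H\"{o}lder's inequality is an equality and $D_{\varphi,\gamma}(cf,f)\geq0$ reduces exactly to $\frac{1}{1+\gamma}\psi(x)+\frac{\gamma}{1+\gamma}\psi(y)\geq\psi\bigl(\frac{1}{1+\gamma}x+\frac{\gamma}{1+\gamma}y\bigr)$, and fixed-ratio convexity at $t=1/(1+\gamma)$ plus continuity does upgrade to convexity. (One caveat: if $\nu(X)<\infty$, normalized indicators do not let $(x,y)$ sweep all of $\mathbb{R}^2$; use unnormalized indicators $f=d\mathbf{1}_S$, which is permitted because nonnegativity is assumed on all of $\mathcal{F}_\gamma$.)

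The genuine gap is in your monotonicity step. Excluding intervals of constancy shows only that $\psi$ has no flat pieces, and ``convex with no flat piece'' does not imply ``strictly increasing'': a convex function with a strictly decreasing branch, such as $\psi(z)=z^2$, is constant on no interval, yet neither of your two test mechanisms rules it out (your flat-interval construction produces no contradiction for it, and convexity holds). You must first prove that $\psi$ is nondecreasing, and this follows from nonnegativity alone by a test in the same spirit as yours: take $q=d\,\mathbf{1}_{[0,M]}$ and $p=d\,\mathbf{1}_{[t,t+M]}$ overlapping on a set of measure $r\in(0,M]$, so that $\langle q^{1+\gamma}\rangle=\langle p^{1+\gamma}\rangle=d^{1+\gamma}M=e^{s}$ while $\langle qp^\gamma\rangle=d^{1+\gamma}r=e^{s'}$ sweeps all $s'\leq s$. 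Since the two outer arguments coincide, inequality (a) of Theorem \ref{thm:fdpd_nonnegative} is an identity and $D_{\varphi,\gamma}(q,p)=\frac{1+\gamma}{\gamma}\left[\psi(s)-\psi(s')\right]\geq0$, whence $\psi(s)\geq\psi(s')$ for every $s'\leq s$, i.e., $\psi$ is nondecreasing on $\mathbb{R}$ (choosing $d$ and $M$ freely; no boundary value $\psi(-\infty)$ is needed). With nondecreasingness secured, your flat-interval exclusion via identity on $\mathcal{P}_\gamma$ does complete strict monotonicity, and together with your convexity argument the necessity direction closes. As written, however, the proposal proves a strictly weaker property than the corollary asserts.
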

\begin{proof}
    According to Theorem \ref{thm:fdpd_nonnegative}, the condition $D_{\varphi,\gamma}(g,f)=0$ holds only if at least one of the following is satisfied:
    \begin{itemize}
        \item $g=f$ almost everywhere,
        \item $g^{1+\gamma}=cf^{1+\gamma}$ almost everywhere for some constant $c>0$.
    \end{itemize}
    In either case, the supports of the nonnegative functions $g$ and $f$ must coincide.
    Thus, the assumptions of Theorem \ref{thm:fdpd_converse} are satisfied, and Corollary \ref{coro:FDPD} follows directly from Theorems \ref{thm:fdpd_nonnegative} and \ref{thm:fdpd_converse}.
    \qed
\end{proof}

\section{Proof of Theorem \ref{thm:fdpd_affine}}
\subsection{$\gamma>0$}
We assume that FDPD \eqref{eq:FDPD} is an affine invariant divergence.
From Definition \ref{def:affine_invariance}, it follows that the following equation must hold.
\begin{align}
    \begin{split}        
        &\frac{1}{\gamma}\varphi\left(|\det \bm{\Sigma}|^\gamma\left\langle g^{1+\gamma}\right\rangle\right) - \frac{1+\gamma}{\gamma}\varphi\left(|\det \bm{\Sigma}|^\gamma\left\langle gf^\gamma\right\rangle\right)+\varphi\left(|\det \bm{\Sigma}|^\gamma\left\langle f^{1+\gamma}\right\rangle\right)\\
        &=h(\bm{\Sigma}, \bm{\mu})^{-1}\left[\frac{1}{\gamma}\varphi\left(\left\langle g^{1+\gamma}\right\rangle\right) - \frac{1+\gamma}{\gamma}\varphi\left(\left\langle gf^\gamma\right\rangle\right)+\varphi\left(\left\langle f^{1+\gamma}\right\rangle\right)\right]
    \end{split} \label{eq:FDPD_FEQ}
\end{align}
From \eqref{eq:FDPD_FEQ}, for any $\bm{\Sigma}_1 \neq \bm{\Sigma}_2$ satisfying $|\det \bm{\Sigma}_1|^\gamma = |\det \bm{\Sigma}_2|^\gamma$, substituting $\bm{\Sigma}_1$ and $\bm{\Sigma}_2$ into the left-hand side of \eqref{eq:FDPD_FEQ} results in the same value for a fixed $\gamma$.
Thus, it follows that $h(\bm{\Sigma}_1, \bm{\mu}) = h(\bm{\Sigma}_2, \bm{\mu})$, implying that $h$ is essentially a function of $|\det \bm{\Sigma}|^\gamma$ and can be expressed using a function $\bar{h}$ as follows:
\begin{align*}
    \bar{h}(|\det\bm{\Sigma}|^\gamma, \bm{\mu}) = h(\bm{\Sigma},\bm{\mu}).
\end{align*}
For simplicity, we omit $\bm{\mu}$ in the following discussion.
Setting $A = |\det \bm{\Sigma}|^\gamma$, $X = \left\langle g^{1+\gamma} \right\rangle$, $Y = \left\langle gf^\gamma \right\rangle$, and $Z = \left\langle f^{1+\gamma} \right\rangle$, we substitute them into \eqref{eq:FDPD_FEQ} and obtain
\begin{align*}
    \bar{h}(A)\left[\frac{1}{\gamma}\varphi\left(AX\right) - \frac{1+\gamma}{\gamma}\varphi\left(AY\right)+\varphi\left(AZ\right)\right]=\frac{1}{\gamma}\varphi\left(X\right) - \frac{1+\gamma}{\gamma}\varphi\left(Y\right)+\varphi\left(Z\right).
\end{align*}
We assume that the functional equation for $\varphi$ holds for any $A,X,Y,Z\in\mathbb{R}_+$.
Thus, we solve $\varphi$ by substituting specific values into each variable.
Substituting $X = Z$, we obtain
\begin{align*}
    \bar{h}(A)\left[\varphi\left(AX\right) - \varphi\left(AY\right)\right] = \varphi\left(X\right) - \varphi\left(Y\right).
\end{align*}
We define the function as
\begin{align}
    \bar{\varphi}(X) = \varphi(X) - \varphi(1) \label{eq:u_varphi}
\end{align}
and by setting $Y = 1$, we obtain
\begin{align}
    \bar{\varphi}(AX) = \bar{\varphi}(A) + \bar{h}(A)^{-1}\bar{\varphi}(X). \label{eq:uax1}
\end{align}
Exchanging $A$ and $X$ in \eqref{eq:uax1}, we obtain
\begin{align}
    \bar{\varphi}(AX) = \bar{\varphi}(X) + \bar{h}(X)^{-1}\bar{\varphi}(A). \label{eq:uax}
\end{align}
From \eqref{eq:uax1} and \eqref{eq:uax}, the following relation holds:
\begin{align}
    \bar{\varphi}(X)\left[1-\bar{h}(A)^{-1}\right] = \bar{\varphi}(A)\left[1-\bar{h}(X)^{-1}\right]. \label{eq:identity}
\end{align}
The trivial solutions to \eqref{eq:identity} are either $\bar{\varphi}(X)=0$ for all $X$ or $\bar{h}(X)^{-1}=1$ for all $X$.
In the case where $\bar{\varphi}(X)=0$ for all $X$, it follows from \eqref{eq:u_varphi} that $\varphi(X)=\varphi(1)$, which does not generate a valid divergence.
However, if $\bar{h}(X)^{-1}=1$ for all $X$, we obtain the following functional equation from \eqref{eq:uax1}:
\begin{align*}
    \bar{\varphi}(AX) = \bar{\varphi}(A) + \bar{\varphi}(X).
\end{align*}
Under the continuity of $\bar{\varphi}$, the general solution to this equation is expressed by $\bar{\varphi}(X) = a\log X$ for some $a \in \mathbb{R}$ \cite[pp. 25--26]{Aczel}.
Thus, from \eqref{eq:u_varphi}, the solution to \eqref{eq:FDPD_FEQ} is
\begin{align*}
    \varphi(X) = a\log X + \varphi(1),
\end{align*}
where $a$ must be positive owing to the nonnegativity of the divergence.
Because $a>0$ and $\varphi(1)\geq0$ are arbitrary constants, we may set $a=1$ and $\varphi(1)=0$ without loss of generality.
Under this setting, we obtain
\begin{align}
    \varphi(z) = \log z. \label{eq:varphi_log}
\end{align}
Eq. \eqref{eq:varphi_log} generates the $\gamma$-divergence, which is the JHHB divergence family \eqref{eq:zeta_gamma_div} with $\gamma>0$ and $\zeta=0$.
Under $X\neq1$ and $A\neq1$, from \eqref{eq:identity}, we define the constant:
\begin{align}
    b = \frac{1-\bar{h}(X)^{-1}}{\bar{\varphi}(X)} = \frac{1-\bar{h}(A)^{-1}}{\bar{\varphi}(A)}, \label{eq:const_m}
\end{align}
which does not depend on $A$ and $X$.
From \eqref{eq:const_m}, we put $\Phi(X)=\bar{h}(X)^{-1} = 1 - b \bar{\varphi}(X)$. 
We obtain
\begin{align}
    \bar{\varphi}(X) = \frac{1}{b} \left(1 - \Phi(X)\right). \label{eq:ux}
\end{align}
Eq. \eqref{eq:uax} becomes
\begin{align}
    \bar{\varphi}(AX) = \bar{\varphi}(X) + \Phi(X) \bar{\varphi}(A). \label{eq:uax2}
\end{align}
Substituting \eqref{eq:ux} into \eqref{eq:uax2}, we obtain the following functional equation:
\begin{align*}
    \Phi(AX) = \Phi(A) \Phi(X).
\end{align*}
Under the continuity of $\Phi$, the general solution to this functional equation is denoted by $\Phi(X)=X^c$, where $c\in\mathbb{R}$ is an arbitrary constant \cite[pp. 28--30]{Aczel}.
Thus, from \eqref{eq:u_varphi} and \eqref{eq:ux}, the solution to \eqref{eq:FDPD_FEQ} is expressed by
\begin{align*}
    \varphi(X) = \frac{1}{b} \left[1 - X^c\right] + \varphi(1).
\end{align*}
Because, $\psi(z) = \varphi(e^z)$ is strictly increasing and convex, according to Definition \ref{def:FDPD}, the signs of $b$ and $c$ are determined as $b < 0$ and $c > 0$.
For $\zeta>0$, setting the arbitrary constants as $b = -\zeta$, $c = \zeta$, and $\varphi(1) = 0$, we obtain
\begin{align}
    \varphi(z) = \frac{z^\zeta - 1}{\zeta}, \quad (\zeta > 0). \label{eq:zeta_power_func}
\end{align}
In particular, \eqref{eq:zeta_power_func} generates the JHHB divergence family \eqref{eq:JHHB_zeta} with $\gamma>0$ and $\zeta>0$.
From $\bar{h}(A)=A^{-\zeta}$, the scale function of the affine transformation is denoted by $h(\bm{\Sigma}, \bm{\mu})=|\det\bm{\Sigma}|^{-\gamma\zeta}$ for $\zeta\geq0$ and $\gamma>0$.
\qed

\subsection{$\gamma=0$}
We assume that FDPD \eqref{eq:FDPD_KL} is an affine invariant divergence.
From Definition \ref{def:affine_invariance}, it follows that the following equation must hold:
\begin{align*}
    \begin{split}
        &h(\bm{\Sigma}, \bm{\mu})\left[|\det \bm{\Sigma}|\varphi'\left(|\det \bm{\Sigma}|\langle g\rangle\right)\left\langle g\log\frac{g}{f}\right\rangle -\varphi\left(|\det\bm{\Sigma}|\langle g\rangle\right) + \varphi\left(|\det\bm{\Sigma}|\langle f\rangle\right)\right]\\
        &=\varphi'\left(\langle g\rangle\right)\left\langle g\log\frac{g}{f}\right\rangle -\varphi(\langle g\rangle)+\varphi(\langle f\rangle).
    \end{split}
\end{align*}
Similar to the case $\gamma>0$, the function $h$ can be expressed as a function of $|\det \bm{\Sigma}|$.
In particular, there exists a function $\bar{h}$ such that $\bar{h}(|\det \bm{\Sigma}|)=h(\bm{\Sigma})$, where $\bm{\mu}$ is omitted for simplicity.
Setting $A=|\det \bm{\Sigma}|$, $X=\langle g\rangle$, $Y=\langle f\rangle$, and $Z=\langle g\log\frac{g}{f}\rangle$, we obtain
\begin{align}
    \bar{h}(A)\left[A\varphi'(AX)Z - \varphi(AX)+\varphi(AY)\right]=\varphi'(X)Z - \varphi(X)+\varphi(Y). \label{eq:FDPD_KL_FEQ}
\end{align}
Because both sides of \eqref{eq:FDPD_KL_FEQ} are linear functions of $Z$, both the coefficients of $Z$ and constant terms must be equal.
Therefore, we obtain
\begin{align}
    A\bar{h}(A)\varphi'(AX) = \varphi'(X), \label{eq:Z_coeff}
\end{align}
and
\begin{align}
    \bar{h}(A)\left[\varphi(AX)-\varphi(AY)\right] = \varphi(X)-\varphi(Y). \label{eq:Z_const}
\end{align}
From \eqref{eq:Z_coeff} and \eqref{eq:Z_const}, we have
\begin{align*}
    \frac{\varphi(X)-\varphi(Y)}{\varphi(AX)-\varphi(AY)} = \frac{\varphi'(X)}{A\varphi'(AX)}.
\end{align*}
Setting $X=1$, we obtain
\begin{align*}
    \varphi(1)-\varphi(Y) = \frac{\varphi'(1)}{A\varphi'(A)}(\varphi(A)-\varphi(AY)).
\end{align*}
Differentiating both sides with respect to $Y$, we obtain
\begin{align}
    \varphi'(AY)\varphi'(1) = \varphi'(A)\varphi'(Y). \label{eq:diff}
\end{align}
Multiplying both sides of \eqref{eq:diff} by $1/\varphi'(1)^2$, we obtain
\begin{align*}
    \frac{\varphi'(AY)}{\varphi'(1)} = \frac{\varphi'(A)}{\varphi'(1)} \cdot \frac{\varphi'(Y)}{\varphi'(1)}. 
\end{align*}
This results in the following functional equation:
\begin{align}
    \tilde{\varphi}(AY) = \tilde{\varphi}(A)\tilde{\varphi}(Y), \label{eq:uay}
\end{align}
where the function $\tilde{\varphi}$ is defined as
\begin{align*}
    \tilde{\varphi}(Y) = \frac{\varphi'(Y)}{\varphi'(1)}.
\end{align*}
Under the assumption that $\tilde{\varphi}$ is continuous, the general solution to the functional equation \eqref{eq:uay} is denoted by $\tilde{\varphi}(Y)=Y^c$, where $c\in\mathbb{R}$ is an arbitrary constant \cite[1, pp.28--30]{Aczel}.
Therefore, we obtain the following differential equation:
\begin{align*}
    \frac{\varphi'(Y)}{\varphi'(1)} = Y^c.
\end{align*}
The general solution to this differential equation is expressed by
\begin{align*}
    \varphi(Y)=
    \begin{cases}
        \varphi'(1)\frac{Y^{1+c}-1}{1+c} + \varphi(1), &(c>-1), \\
        \varphi'(1)\log Y+\varphi(1), &(c=-1).
    \end{cases}
\end{align*}
From Definition \ref{def:FDPD}, it follows that $1+c\geq0$ and $\varphi'(1)>0$.
By setting $\zeta=1+c$, so that $\zeta\geq0$, $\varphi(1)=0$ and $\varphi'(1)=1$, we obtain
\begin{subnumcases}{\varphi(z)=}
    \frac{z^{\zeta}-1}{\zeta}, &$(\zeta> 0)$, \label{eq:FDPD_func_zeta}\\
    \log z, &$(\zeta=0)$. \label{eq:FDPD_func_zeta0}
\end{subnumcases}
Eqs. \eqref{eq:FDPD_func_zeta} and \eqref{eq:FDPD_func_zeta0} generate the JHHB divergence family \eqref{eq:JHHB_KL_zeta} and \eqref{eq:JHHB_KL}, respectively.
Based on $\bar{h}(A)=A^{-\zeta}$, the scale function of the affine transformation is denoted by $h(\bm{\Sigma}, \bm{\mu})=|\det\bm{\Sigma}|^{-\zeta}$ with $\zeta\geq0$ and $\gamma=0$.
\qed

\section{Proof of Theorem \ref{thm:hd_jones_div}}

Let $\tau$ be a strictly increasing function. 
Suppose that the JHHB score family and the H\"{o}lder score are equivalent.
Then, we have
\small
\begin{subnumcases}{-\tau\left(-\eta\left(\frac{\left\langle g f^\gamma \right\rangle}{\left\langle f^{1+\gamma} \right\rangle}\right) \left\langle f^{1+\gamma} \right\rangle\right)=}
    -\frac{1+\gamma}{\zeta}\left\langle g f^\gamma \right\rangle^\zeta + \frac{\gamma}{\zeta}\left\langle f^{1+\gamma} \right\rangle^\zeta + \frac{1}{\zeta}, &$(\zeta>0)$, \label{eq:JHHB_xi_H}\\
    -(1+\gamma)\log\left\langle g f^\gamma \right\rangle + \gamma\log\left\langle f^{1+\gamma} \right\rangle, &$(\zeta=0)$, \label{eq:JHHB_xi_H0}
\end{subnumcases}
\normalsize
where the JHHB score family is obtained by substituting $\varphi(z)=(z^\zeta-1)/\zeta$ for $\zeta>0$ and $\varphi(z)=\log z$ into the FDPS \eqref{eq:FDP_score_gamma}.
Setting $g = f$, we obtain
\begin{subnumcases}{\tau(z)=}
    \frac{z^\zeta-1}{\zeta}, &$(\zeta>0)$, \label{eq:xi_zeta}\\
    \log z, &$(\zeta=0)$. \label{eq:xi_zeta0}
\end{subnumcases}
In the case of $\zeta=0$, substituting \eqref{eq:xi_zeta0} into \eqref{eq:JHHB_xi_H0} yields $\eta(z)=-z^{1+\gamma}$, which is the lower bound of $\eta$ \cite{holder-div1}.
By substituting \eqref{eq:xi_zeta} into \eqref{eq:JHHB_xi_H}, we obtain the following equation:
\begin{align*}
    (1+\gamma)\left( \frac{X}{Y} \right)^\zeta - \gamma 
    = \left( -\eta\left(\frac{X}{Y}\right) \right)^\zeta,
\end{align*}
where we put $X=\langle gf^\gamma\rangle$ and $Y=\langle f^{1+\gamma}\rangle$.
Furthermore, let $z =X/Y$, we derive the function $\eta$ for the case $\zeta > 0$ as
\begin{align}
    \eta(z) = -{\rm sign}((1+\gamma)z^\zeta-\gamma)\cdot\left|(1+\gamma)z^\zeta-\gamma\right|^{\frac{1}{\zeta}}. \label{eq:JHHB_H_func}
\end{align}

We verify that function $ \eta $ satisfies the conditions $\eta(1)=-1$ and $\eta(z)\geq-z^{1+\gamma}$ for all $z\geq0$.  
Thus, it is easy to verify that $\eta(1) = -1$. 
Subsequently, we show that $ \eta(z) \geq -z^{1+\gamma} $.
From \eqref{eq:JHHB_H_func}, when $(1+\gamma) z^\zeta - \gamma \geq 0$, we have $-\mathrm{sign}((1+\gamma) z^\zeta - \gamma) = -1$, and hence
\begin{align*}
    (1+\gamma) z^\zeta - \gamma \leq z^{\zeta(1+\gamma)}.
\end{align*}
We put $t = z^\zeta$, this inequality becomes
\begin{align*}
    (1+\gamma)t - \gamma \leq t^{1+\gamma}.
\end{align*}
Because the right-hand side is a convex function for $\gamma>0$ and the left-hand side is the tangent line to $t^{1+\gamma}$ at $t = 1$, the inequality holds.
Conversely, from \eqref{eq:JHHB_H_func}, when $(1+\gamma) z^\zeta - \gamma < 0$, we have $-\mathrm{sign}((1+\gamma) z^\zeta - \gamma) = 1$, and thus
\begin{align*}
    \left[ (1+\gamma) z^\zeta - \gamma \right]^{\frac{1}{\zeta}} \geq -z^{1+\gamma}.
\end{align*}
Therefore, in both cases of \eqref{eq:JHHB_H_func}, we conclude that $\eta(z) \geq -z^{1+\gamma}$ for all $z\geq0$.
\qed

\section{Proof of Theorem \ref{thm:fdp_inq}}
By setting $\eta(z)=\gamma-(1+\gamma)z$, $\xi(z)=\varphi(z)$ in \eqref{eq:3HS}, and using the inequality $\eta(z)\geq-z^{1+\gamma}$, we obtain 
\begin{align}
    &\gamma \varphi(\langle f^{1+\gamma}\rangle) - (1+\gamma)\varphi(\langle gf^\gamma\rangle)\geq
    -\frac{\varphi(\langle gf^\gamma\rangle)^{1+\gamma}}{\varphi(\langle f^{1+\gamma}\rangle)^\gamma} \nonumber \\
    &=-\exp\left(-\left[\gamma\log\varphi(\langle f^{1+\gamma}\rangle)-(1+\gamma)\log \varphi(\langle gf^\gamma\rangle)\right]\right) \nonumber\\
    &= -\exp\left(-\left[\gamma \varphi_*(\langle f^{1+\gamma}\rangle) - (1+\gamma)\varphi_*(\langle gf^\gamma\rangle)\right]\right), \label{eq:fdp_lower}
\end{align}
where we define $\varphi_*(z)=\log \varphi(z)$.
If $\psi_*(z)=\varphi_*(e^z)=\log\varphi(e^z)$ is a strictly increasing and convex function, then the lower bound is an FDPS.
Thus, $\psi(z)=\varphi(e^z)$ being a strictly increasing and log-convex function is a necessary and sufficient condition for the lower bound in \eqref{eq:fdp_lower} to be an FDPS.
\qed

\section{Proof of Theorem \ref{thm:formula_V_and_U}}
\begin{lemma}\label{lemma:relation_V_and_U}
    Under Assumption \ref{assumption:holder_form}, $u(az+b)=v(z)$ and $zU(z)=aV(z)+b$ hold for all $z>0$, where $a\neq0$ and $b\in\mathbb{R}$ are an arbitrary constants.
\end{lemma}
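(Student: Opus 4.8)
The plan is to discard the outer function $\eta$ entirely and work only with the identity $u(\langle gU(g)\rangle)=v(\langle V(g)\rangle)$ furnished by Assumption \ref{assumption:holder_form}. Since $u$ and $v$ are strictly increasing and continuous, $u$ is invertible, so I would set $w:=u^{-1}\circ v$ (again strictly increasing and continuous) and rewrite the identity as
\begin{align*}
    \langle gU(g)\rangle = w(\langle V(g)\rangle)\qquad\text{for every admissible }g.
\end{align*}
The whole lemma then reduces to showing that $w$ is affine, $w(z)=az+b$ with $a\neq0$: the relation $u(az+b)=v(z)$ is just the rewriting $v=u\circ w$, and the pointwise relation $zU(z)=aV(z)+b$ will drop out by substituting the affine form back into the displayed identity.

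To obtain the affinity of $w$, I would exploit additivity over disjoint supports. Writing $P(z):=zU(z)$ and $\tilde V(z):=V(z)-V(0)$, so that $P(0)=\tilde V(0)=0$, the functionals $\bar P[g]:=\int P(g)\,d\nu$ and $\bar V[g]:=\int \tilde V(g)\,d\nu$ are additive: for $g_1,g_2$ with disjoint supports one has $\bar P[g_1+g_2]=\bar P[g_1]+\bar P[g_2]$ and the same for $\bar V$. Because $\langle V(g)\rangle=\bar V[g]+K$ with $K:=V(0)\,\nu(X)$ a constant, setting $\omega(s):=w(s+K)$ turns the identity into $\bar P[g]=\omega(\bar V[g])$, and disjoint superposition yields Cauchy's equation $\omega(s+t)=\omega(s)+\omega(t)$ with $s=\bar V[g_1]$ and $t=\bar V[g_2]$. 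Here I would invoke Assumptions 4.1 and 4.2 of \cite{holder-div1} together with the continuity of $u,v$ to ensure that $\bar V[g]$ sweeps a full interval (e.g. via $g=\alpha\mathbf{1}_A$ with $\alpha$ and $\nu(A)$ free) and that $\omega$ is continuous; the continuous solutions of Cauchy's equation are linear, so $\omega(s)=as$ and $w(y)=a(y-K)$ is affine, with $a>0$ since $w$ is strictly increasing.

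Once $w(z)=az+b$ is in hand, the identity $v=u\circ w$ gives $u(az+b)=v(z)$ directly. For the second relation I would feed the affine form into $\langle gU(g)\rangle=a\langle V(g)\rangle+b$, i.e. $\int[P(g)-aV(g)]\,d\nu=b$ for every admissible $g$; testing this against functions $\alpha\mathbf{1}_A$ and comparing two different measures $\nu(A)$ forces the integrand $\phi(z):=zU(z)-aV(z)$ to be constant, which is precisely $zU(z)=aV(z)+b$ for all $z>0$.

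The main obstacle I anticipate is the functional-equation step, not the algebra. I must verify that the superposition argument is legitimate---that disjoint supports with prescribed $\bar V$-values can be chosen, that both integrals remain finite, and that the zero-region contribution $V(0)\nu(X)$ is correctly isolated as the constant $K$---and that continuity genuinely excludes the pathological additive solutions, which is exactly where the regularity of Assumptions 4.1 and 4.2 in \cite{holder-div1} and the continuity of $u,v$ enter. The delicate bookkeeping point is the constant itself: its dependence on $V(0)$ (and on the total mass $\nu(X)$) is what must be tracked carefully to match the two displayed relations, and it collapses to the clean case once $V(0)=0$.
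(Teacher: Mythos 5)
Your proposal is correct and lands on the same two identities, but the core functional-equation step is genuinely different from the paper's. The paper also reduces everything to $w=u^{-1}\circ v$ via step functions, but it substitutes a \emph{two-valued} function on the unit interval ($c_1$ on $[0,t]$, $c_2$ on $(t,1]$) and obtains Jensen's equation $t\,w(V(c_1))+(1-t)\,w(V(c_2))=w(tV(c_1)+(1-t)V(c_2))$, whose continuous solutions are immediately affine, $w(z)=az+b$; the pointwise relation $zU(z)=aV(z)+b$ then drops out of $cU(c)=w(V(c))$ with no further work, and the same constant $b$ appears in both displayed identities automatically. You instead use additivity over \emph{arbitrary} disjoint supports to get Cauchy's equation for the shifted function $\omega(s)=w(s+K)$, which is equally valid under continuity but costs you two extra steps the paper avoids: recovering the affine constant from the shift $K=V(0)\nu(X)$, and a separate indicator-testing argument for the pointwise relation. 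That is exactly where your flagged bookkeeping worry bites: your integral identity yields the constant $-aV(0)\nu(X)$ while the pointwise identity yields $-aV(0)$, and these coincide only when $V(0)=0$ (or $\nu(X)=1$). The paper sidesteps this by normalizing the support to the unit interval (and, implicitly, by the vanishing of $V$ at $0$, which is only made explicit later via Assumption 4.2(b) of \cite{holder-div1} in the proof of the subsequent theorem); note also that the lemma and the paper's proof of it invoke only Assumption \ref{assumption:holder_form} and continuity, whereas you import Assumptions 4.1 and 4.2 of \cite{holder-div1} already at this stage. Net effect: your route is sound and slightly more general in how it generates test functions, but the Jensen-equation route is cleaner precisely because it keeps the affine constant and the pointwise identity tied together from the start.
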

\begin{proof}
Suppose that the nonnegative function $g\in\mathcal{F}$ is a constant function on $[0,k]$; that is,
\begin{align*}
    g(x)=\begin{cases}
        c, \quad& x\in[0,k],\\
        0, \quad& x\notin[0,k],
    \end{cases}
\end{align*}
where $c\geq0$ and $k>0$ are arbitrary constants.
Under Assumption \ref{assumption:holder_form}, the following equality
\begin{align}
    u(cU(c)) = v(V(c)) \label{eq:unit_int_func}
\end{align}
holds, where $u$ and $v$ are strictly increasing functions.
Thus, the inverse function of $u$ exists.
Eq. \eqref{eq:unit_int_func} can be rewritten as follows:
\begin{align}
    cU(c) = w(V(c)), \label{eq:cUTV}
\end{align}
where function $w$ is defined as
\begin{align}
    w(z)=u^{-1}(v(z)). \label{eq:w_inverse}
\end{align}
Subsequently, suppose that $g$ is a function assuming two distinct values on the unit interval
\begin{align*}
    g(x) = \begin{cases}
        c_1, &x\in[0, t],\\
        c_2, &x\in(t, 1],\\
        0, &x\notin[0,1],
    \end{cases}
\end{align*}
where $0\leq t\leq1$ and $c_1\geq0$ and $c_2\geq0$ are arbitrary constants.
Under Assumption \ref{assumption:holder_form}, we obtain the following equation,
\begin{align*}
    u(t c_1U(c_1)+(1-t)c_2U(c_2)) = v(t V(c_1)+(1-t)V(c_2))
\end{align*}
for all $c_1\geq0$ and $c_2\geq0$.
Using \eqref{eq:cUTV}, we obtain the following functional equation with respect to $w$:
\begin{align}
    t w(V(c_1)) +(1-t)w(V(c_2)) = w(t V(c_1) + (1-t)V(c_2)). \label{eq:Jensen_eq}
\end{align}
By Jensen's inequality, under the continuity of $w$, the general solution of \eqref{eq:Jensen_eq} must be both convex and concave; thus, it is denoted by $w(z)=az+b$, where $a\neq0$ and $b\in\mathbb{R}$.
Therefore, from \eqref{eq:cUTV} and \eqref{eq:w_inverse}, we have for all $z\geq0$,
\begin{align*}
    u(az+b)=v(z),
\end{align*}
and
\begin{align*}
    zU(z) = aV(z) + b.
\end{align*}
\qed
\end{proof}

\begin{lemma}[\cite{holder-div1}]\label{lemma:Kanamori}
    Under Assumptions 4.1 and 4.2 in \cite{holder-div1}, the functions $U$ and $V$ in Definition \ref{def:divergence} satisfy
    \begin{align*}
        V(z)=m\int zU'(z)dz, 
    \end{align*}
    for $z>0$, where $m\in\mathbb{R}\setminus\{0\}$ is a nonzero constant.
\end{lemma}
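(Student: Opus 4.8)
The plan is to recover the relation between $U$ and $V$ from the first-order stationarity condition that strict propriety imposes on the composite scoring rule. Fix $g\in\mathcal{F}$. Since $S$ is proper, the map $f\mapsto S(g,f)=T(\langle gU(f)\rangle,\langle V(f)\rangle)$ attains its minimum over the cone $\mathcal{F}$ at $f=g$. First I would take a variation $f_\epsilon=g+\epsilon\phi$, where $\phi$ is a test perturbation supported on the set $\{g>0\}$ so that $f_\epsilon\in\mathcal{F}$ for all small $|\epsilon|$. Writing $a=\langle gU(g)\rangle$ and $b=\langle V(g)\rangle$, and differentiating under the integral sign (licensed by the smoothness of $U$, $V$, $T$ postulated in Assumptions 4.1 and 4.2 of \cite{holder-div1}), the stationarity condition $\frac{d}{d\epsilon}S(g,f_\epsilon)\big|_{\epsilon=0}=0$ reads
\begin{align*}
    T_1(a,b)\,\langle g\,U'(g)\,\phi\rangle + T_2(a,b)\,\langle V'(g)\,\phi\rangle = 0,
\end{align*}
where $T_1$ and $T_2$ denote the partial derivatives of $T$ with respect to its first and second arguments.

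Next, since $\phi$ ranges over all perturbations supported where $g>0$, the fundamental lemma of the calculus of variations forces the integrand to vanish pointwise: for almost every $x$ with $g(x)>0$ one has $T_1(a,b)\,g(x)U'(g(x)) + T_2(a,b)\,V'(g(x)) = 0$. Substituting $z=g(x)>0$ and rearranging yields
\begin{align*}
    V'(z) = -\frac{T_1(a,b)}{T_2(a,b)}\,zU'(z),
\end{align*}
valid for every $z$ in the range of $g$.

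The crux of the argument is to upgrade the coefficient $-T_1(a,b)/T_2(a,b)$ to a genuine constant. The factor $V'(z)/(zU'(z))$ depends only on $z$, whereas the right-hand factor depends on $g$ solely through the scalar pair $(a,b)$. I would therefore vary $g$ over two-valued step functions, taking the value $z_0$ on a set of measure $t$ and a value $z_1$ on a set of measure $s$; then $a=tz_0U(z_0)+sz_1U(z_1)$ and $b=tV(z_0)+sV(z_1)$ can be moved freely while $z_0$ stays in the range. Since the evaluation at $z_0$ gives $V'(z_0)=m_g\,z_0U'(z_0)$ with $m_g=-T_1(a,b)/T_2(a,b)$ for each such $g$, and the left-hand side is fixed, the common value $m_g$ must be one constant $m$ independent of $g$ wherever $z_0U'(z_0)\neq0$. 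Hence $V'(z)=m\,zU'(z)$ for all $z>0$, and integrating gives $V(z)=m\int zU'(z)\,dz$. Finally $m\neq0$: if $m=0$ then $V$ is constant, so $S(g,f)$ would not depend on $f$ through its second slot, contradicting strict propriety (equivalently $T_2\not\equiv0$ on the diagonal).

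I expect the main obstacle to be this last step—rigorously establishing that $m$ is the same constant for every admissible $g$. This needs a family of test functions $g$ rich enough that the realized pairs $(a,b)$ are varied while a fixed value $z_0$ remains in the range, together with the non-degeneracy ($T_2\neq0$, $zU'(z)\not\equiv0$) and regularity supplied by Assumptions 4.1 and 4.2 in \cite{holder-div1}; these guarantee that differentiation under the integral is valid, that the ratio $V'(z)/(zU'(z))$ is well defined, and that the overlapping-range matching is legitimate.
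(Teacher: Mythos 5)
The paper does not actually prove this lemma---it is imported verbatim from \cite{holder-div1}---and your variational derivation is essentially the original argument behind it: first-order stationarity of the proper composite score at $f=g$ under perturbations $f_\epsilon=g+\epsilon\phi$, the fundamental lemma of the calculus of variations to obtain $T_1(a,b)\,zU'(z)+T_2(a,b)\,V'(z)=0$ pointwise on the range of $g$, and two-valued step functions with overlapping ranges to pin the ratio $-T_1/T_2$ down to a single constant $m$ (the same step-function device this paper itself uses in its Lemma~\ref{lemma:relation_V_and_U}). Your sketch is sound; the only slightly loose point is the closing justification that $m\neq0$, since dependence on $f$ through the first slot alone does not by itself contradict strict propriety (the log score on $\mathcal{P}$ is a counterexample), but over the cone $\mathcal{F}$ the same stationarity argument combined with the non-degeneracy of $T$ and non-constancy of $U$ in Assumptions 4.1--4.2 of \cite{holder-div1} does rule out $m=0$, as your parenthetical remark about $T_2$ already suggests.
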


We prove Theorem \ref{thm:formula_V_and_U} using Lemmas \ref{lemma:relation_V_and_U} and \ref{lemma:Kanamori}.
Eq. \eqref{eq:H_assump} takes the form of a composite scoring rule (Definition \ref{def:divergence}), as it can be written by setting $T(x,y)=\eta(u(x)/v(y))v(y)$.
Therefore, Lemma \ref{lemma:Kanamori} is applicable.
The following equation holds by Lemmas \ref{lemma:relation_V_and_U} and \ref{lemma:Kanamori}:
\begin{align}
    \frac{1}{a}zU(z)-\frac{b}{a} = m\int zU'(z)dz, \label{eq:U_zU_dash}
\end{align}
where $b\in\mathbb{R}$ and $a, m\in\mathbb{R}\setminus\{0\}$ are constants.
Differentiating both sides of \eqref{eq:U_zU_dash} with respect to $z$, we obtain the following differential equation,
\begin{align}
    \frac{1}{a}(U(z)+zU'(z)) = mzU'(z). \label{eq:U_zU_dash2}
\end{align}
Assuming $am\neq1$, rearranging \eqref{eq:U_zU_dash2} yields the following differential equation,
\begin{align}
    \frac{U'(z)}{U(z)} = \frac{1}{am-1}\frac{1}{z}. \label{eq:U_zU_dash3}
\end{align}
Integrating both sides of \eqref{eq:U_zU_dash3} with respect to $z$ yields
\begin{align*}
    \log |U(z)| = \frac{1}{am-1}\log |z| + k_1,
\end{align*}
where $k_1\in\mathbb{R}$ is a constant.
Therefore, the general solution is expressed as
\begin{align*}
    U(z) = cz^{\frac{1}{am-1}},
\end{align*}
where $c=\pm e^{k_1}\in\mathbb{R}\setminus\{0\}$ is a constant.
From $U'(z)=\frac{c}{am-1}z^{\frac{1}{am-1}-1}$ and Lemma \ref{lemma:Kanamori}, the function $V$ is expressed as
\begin{align*}
    V(z) = m\int \frac{c}{am-1}z^{\frac{1}{am-1}}dz = \frac{c}{a}z^{\frac{am}{am-1}} + k_2,
\end{align*}
where $k_2\in\mathbb{R}$ is a constant.
From Lemma \ref{lemma:relation_V_and_U}, it must hold that $ak_2+b=0$.
From Assumption 4.2(b) in \cite{holder-div1} and Lemma \ref{lemma:relation_V_and_U}, it follows that $\lim_{z\to0}V(z)=0=V(0)$ and that $\lim_{z\to0}V'(z)$ must exist.
Therefore, it must hold that $1/(am-1)>0$ and $k_2=b=0$.
Thus, functions $U$ and $V$ are expressed by
\begin{align}
    U(z) &= cz^\gamma, \quad
    V(z) =\frac{c}{a}z^{1+\gamma}, \label{eq:func_UV}
\end{align}
where $\gamma>0$ and $a,c\in\mathbb{R}\setminus \{0\}$ are constants.
The functions $u$ and $v$ satisfy the following relation:
\begin{align}
    u(az)=v(z). \label{eq:relation_uv}
\end{align}
Substituting \eqref{eq:func_UV} and \eqref{eq:relation_uv} into \eqref{eq:H_assump} yields \eqref{eq:H_FDP_score}.
That is, it holds that
\begin{align*}
    S(g,f) = \eta\left(\frac{u(\langle gU(f)\rangle)}{v\left(\langle V(f)\rangle\right)}\right)v(\langle V(g)\rangle)=\eta\left(\frac{u(c\langle gf^\gamma\rangle)}{u(c\langle f^{1+\gamma}\rangle)}\right)u(c\langle f^{1+\gamma}\rangle).
\end{align*}
\qed

\section{Proof of Theorem \ref{thm:H_FDP_nonnegative}}
By calculating the lower bound of \eqref{eq:H_FDP_score}, we obtain
\begin{align*}
    S(g,f) &= \eta\left(\frac{u(c\langle gf^\gamma\rangle)}{u(c\langle f^{1+\gamma}\rangle)}\right)u(c\langle f^{1+\gamma}\rangle)
    \overset{(a)}{\geq}-\frac{u(c\langle gf^\gamma\rangle)^{1+\gamma}}{u(c\langle f^{1+\gamma}\rangle)^\gamma}\overset{(b)}{\geq}-u(c\langle g^{1+\gamma}\rangle),
\end{align*}
where inequality (a) follows from $\eta(z)\geq-z^{1+\gamma}$ for all $z\geq0$ and inequality (b) must hold because $S(g,f)\geq S(g,g)=-u(c\langle g^{1+\gamma}\rangle)$ for all $g,f\in\mathcal{F}_\gamma$.
By rewriting inequality (b), we obtain the following inequality:    
\begin{align*}
    \exp\left((1+\gamma)\log u(c\langle gf^\gamma\rangle) -\gamma\log u(c\langle f^{1+\gamma}\rangle)\right)\leq\exp\left(\log u(c\langle g^{1+\gamma}\rangle)\right).
\end{align*}
Taking the logarithm of both sides results in the following inequality:
\begin{align*}
    \log u(c\langle g^{1+\gamma}\rangle) - (1+\gamma)\log u(c\langle gf^\gamma\rangle) +\gamma\log u(c\langle f^{1+\gamma}\rangle) \geq 0.
\end{align*}
Let $\varphi(z)=\log u(cz)$.
Subsequently, we obtain the following inequality.
A constant multiple of the left-hand side of this inequality corresponds to the FDPD,
\begin{align*}
    \varphi(\langle g^{1+\gamma}\rangle) - (1+\gamma)\varphi(\langle gf^\gamma\rangle) +\gamma\varphi(\langle f^{1+\gamma}\rangle) \geq 0.
\end{align*}
From Corollary \ref{coro:FDPD}, the necessary and sufficient condition for the FDPD to be a divergence is that $\psi(z)=\varphi(e^z)$ is strictly increasing and convex.
Therefore, the necessary and sufficient condition for the composite scoring rule \eqref{eq:H_FDP_score} to be strictly proper is that $\log u(ce^z)$ is strictly increasing and convex, and $c$ is a positive constant.
Moreover, because $\varphi:[0,\infty)\to[-\infty,\infty]$, it follows that $u:[0,\infty)\to[0,\infty]$.
\qed

\end{document}